\newtheorem{theorem}{Theorem}
\newtheorem{definition}[theorem]{Definition}
\newtheorem{lemma}[theorem]{Lemma}
\newtheorem{proposition}[theorem]{Proposition}
\newtheorem{remark}[theorem]{Remark}
\begin{document}

\title{Energy Efficient Resource Allocation\ for Time-Varying OFDMA Relay
Systems with Hybrid Energy Supplies}
\author{Bo Yang, Yanyan Shen, Qiaoni Han, Cailian Chen, Xinping Guan, and
Weidong Zhang
\thanks{Copyright (c) 2015 IEEE. Personal use of this material is permitted. However, permission to use this material for any other purposes must be obtained from the IEEE by sending a request to pubs-permissions@ieee.org.}
\thanks{%
B. Yang, Q. Han, C. Chen, X. Guan and W. Zhang are with Shanghai Jiao Tong
University, Shanghai, 200240, P.R. China. Y. Shen is with Shenzhen
Institutes of Advanced Technology, Chinese Academy of Sciences, Shenzhen,
Guangdong Province 518000, P.R. China.}}
\maketitle
\thispagestyle{fancy}
\fancyhead[L]{DOI: 10.1109/JSYST.2016.2551319}
\cfoot{}

\begin{abstract}
This paper investigates the energy efficient\ resource allocation for
orthogonal frequency division multiple access (OFDMA) relay systems, where
the system is supplied by the conventional utility grid and a renewable
energy generator equipped with a storage device. The optimal usage of radio
resource depends on the characteristics of the renewable energy generation
and the mobile traffic, which exhibit both temporal and spatial diversities.
Lyapunov optimization method is used to decompose the problem into the joint
flow control, radio resource allocation and energy management without
knowing a priori knowledge of system statistics. It is proven that\emph{\ }%
the proposed algorithm can result in close-to-optimal performance with
capacity limited data\ buffer and storage device. Simulation results show
that the flexible tradeoff between the system utility and the conventional\
energy consumption can be achieved. Compared with other schemes, the
proposed algorithm demonstrates better performance.
\end{abstract}

\begin{keywords}
Cooperative relay, OFDMA, renewable energy, stochastic optimization.
\end{keywords}

\section{Introduction}

\setcounter{page}{1} Orthogonal frequency division multiple access (OFDMA)
is regarded as a promising technology in the future wireless communications,
due to its advantages of high spectral efficiency and resistance to
multipath fading.\ On the other hand, cooperative relaying is able to
enhance the transmission range, system capacity and reliability. Thus, the
relay-assisted OFDMA system has been\ attracting intensive research
interests in both academia and industry since it can provide cooperative
diversity, channel diversity and multiuser diversity gains at the same time.

Although the transmission quality of wireless systems can be improved with
cooperative communications, the new challenges due to the dramatic increase
of power cost and traffic demands in wireless communications\ have been
increasingly prominent. About half of the mobile service provider's annual
operating expenses are power costs and base stations (BSs) consume over\
80\% of the total energy of the network as claimed by\ Fehske \textit{et al.}%
\ \cite{Fehske80} and Feng \textit{et al.} \cite{Geoffrey_Li}. Moreover the
largest fraction of greenhouse gas emissions caused by radio networks come
from access networks, and mostly from BSs. Then the needs for energy saving
of radio networks advocate green radio communications. Green radio
communications will not only help mobile operators to enhance the energy
efficiency of BSs but also reduce the emissions of CO$_{2}$ by using
renewable energy\ sources (\textit{e.g.,} wind and solar power) to supply
wireless communications in addition to the constant alternating current (AC)
from utility grid, e.g., the works by Ho \textit{et al.} \cite{RuiZhang}
and\ Niyato \textit{et al}. \cite{duisit_niyato}. As to the relay-assisted
transmission powered by renewable energy source, like the idea of demand
response in \cite{Mao_shiwen}-\cite{Jiming_chen}\ the relay selection and
radio resource allocation should be adaptive to the variations in harvested
renewable energy in order to ensure the quality of service (QoS) of mobile
users with the minimum AC energy. In addition, the joint design of relay
selection, subcarrier allocation and power control in relay-assisted OFDMA
networks\ is a challenge since all the considered variables are coupled
together, let alone the mentioned time-varying renewable energy generation.

In this paper, we consider a downlink OFDMA relay-assisted network
consisting of multiple relay stations and a BS, which is powered by hybrid
energy, \textit{i.e.,} renewable energy and constant AC from the grid.
Although the renewable energy provided by harvester is free, it is
intermittent over time due to time-varying solar and wind patterns.\ In
order to smooth the uncertain supply of renewable source, an energy storage
device is equipped with the BS to buffer the intermittent renewable energy
generation. Since the capacity of energy storage device and data buffer at
the BS\ are finite, it is necessary to propose\ a joint design in order to
coordinate the downlink transmission, radio resource allocation and energy
management of the storage device in order to maximize the throughput
utility\ with the least grid energy in a long-term manner.\ In summary, the
contributions of this paper are as follows:

\begin{itemize}
\item From the perspective of network operator, we formulate a stochastic
optimization problem to keep a balance between the throughput utility and
on-grid energy consumption, which are usually two conflicting objectives.

\item An online algorithm \emph{FREE} is proposed to regulate the downlink
data requests, allocate radio resource and manage\ energy in the\ storage
device under the constraints of finite buffer size and storage capacity.

\item The proposed \emph{FREE} can arbitrarily approach the optimal tradeoff
between system throughput and on-grid\ energy consumption\ by theoretic
analysis and simulation results.
\end{itemize}

The rest of this paper is organized as follows. Section II presents a review
of related works. Section III describes the system model and the components
of the system. Section IV casts the problem into a stochastic optimization
problem. Section V presents an online algorithm to jointly allocate resource
and manage energy. Section VI analyzes the performance of the\ proposed
algorithm. Simulations are provided in Section VII. The paper is concluded
in Section VIII.

\section{Related Works}

Since this paper considers the scenario of relay-based cooperative networks
with hybrid energy supply, the related works on cooperative communications
are first reviewed and then some latest results on renewable energy powered
communications are discussed.

Various cooperative communication schemes such as Amplify-and-Forward (AF)
and Decode-and-Forward (DF) have been proposed in the literature \cite%
{Feifei_Gao}-\cite{SongGuo}. Compared with\ a DF relay decoding,
re-modulating and retransmitting the received signal, an AF one simply
amplifies and retransmits the signal towards destinations without decoding.
Wang \textit{et al. }in\ \cite{WangDi} and Shen \textit{et al}. in \cite%
{YanyanShen} proposed power control for fairness in AF relay networks. The
power control of multiple users\ with AF relays are considered by Zappone
\textit{et al}. in\ \cite{Zappone_relay_game}\ and by Cheung \textit{et al}.
in \cite{southampton} for energy-efficiency. Depending on the channel
condition and performance criterion, DF may outperform AF, or vice versa.
For DF relay strategy,\ \cite{tao_wang}-\cite{Shenyanyan_singapore}
considered the resource allocation by solving a class of static optimization
problem. The joint design of congestion control and relay resource
allocation for LTE-A was proposed in \cite{Benliang_TWC} by solving a
stochastic optimization problem. Tang \textit{et al}. in\ \cite{Tangjian}
proposed three greedy algorithms for multiuser multiple relay networks to
maximize the network utility and comprehensive theoretical analysis was
given to measure the worst performance. Different from the assumption of
half-duplex at relay, Li \textit{et al}. in \cite{SongGuo} proposed a
full-duplex cooperative communication scheme.\ In this paper, the
energy-aware\ relay-assisted transmission aims to balance the throughput
utility and on-grid energy consumption. Since the transmission power of
relays is coupled with the power allocation of the BS, which is supplied by
hybrid energy, the relay power allocation, transmission mode selection
(cooperative transmission or direct transmission) and subcarrier allocation
are all affected by queue states, residual energy in storage device. This
makes the radio resource allocation problem significantly different from the
aforementioned works.

To reduce the consumption of on-grid energy in wireless access networks,
there are some works considering that the transmitter is powered by hybrid
energy source \cite{Ahmed2013}-\cite{Peter}. This paper considers that the
transmitter is also powered by hybrid energy source in a relay-assisted
cooperative communication scenario, where an online joint resource
allocation and energy management algorithm is proposed with lower
computational complexity compared with the dynamic programming based ones in
\cite{Ahmed2013}-\cite{ng}. Compared against \cite{Peter}\ the salient
feature of the proposed algorithms in this paper is that its implementation
is independent of the energy harvesting profiles.\ For the delay/throughput
optimal transmission with renewable energy sources, the schemes on\
transmission power control and energy management in battery for
point-to-point transmissions were proposed in \cite{Aylin_Yener}-\cite%
{HuangHUang}. The network planning problem with sustainable energy sources
in OFDMA relay networks was considered by Zheng \textit{et al.} in \cite%
{Shen_relay_renew} to ensure network connectivity and users' QoS
requirements.\ To further reduce the circuit energy consumption of cellular
networks, the BSs or relay stations (RSs) can be switched into sleep mode
dynamically\ by utilizing the spatial and temporal fluctuations of traffic
loads and inter-cell cooperation \cite{Meo}-\cite{Ansari}. The most related
work also includes \cite{Ahmed}, where Ahmed \textit{et al.} consider power
control for a cooperative system consisting of one energy harvesting (EH)
source and one EH relay communicating to one destination. Almost all the
above works assume that either the data buffer or the storage device is
infinite to make their problem tractable. Also the prior knowledge on random
events is needed in their designs. In this paper, the trade-off between
throughput utility and grid energy consumption is considered for
multi-user/multi-relay networks under the constraints of finite data and
energy buffer. An online algorithm is designed without relying on any
statistical information.

\section{System Model}

\subsection{Physical Layer Model}

We consider the downlink transmission in a two-hop relay-assisted network,
which consists of a set of users $\mathcal{N=}\left\{ 1,2,\cdots ,N\right\} $
indexed by $n,$ a set of RSs $\mathcal{R=}\left\{ 1,2,\cdots ,K\right\} $
indexed by $i$ and one BS indexed by $B$.$\ $OFDMA is employed over both
hops and the whole bandwidth $W$ is equally divided into $M$ subcarriers
with sub-bandwidth $b=\frac{W}{M}.$ The set of subcarriers is $\mathcal{M=}%
\left\{ 1,2,\cdots ,M\right\} $ and the subcarrier is indexed by $m$. The
noise at each receiver on each subcarrier is zero-mean circular symmetric
complex Gaussian with variance $bN_{0}.$\ The system is assumed to be a
time-slotted one with equal time interval normalized to be unit time.
Hereafter $b$ is assumed to be one without loss of generality.

The BS can transmit directly to each user or\ employ one of the RSs for
cooperative communications. If the cooperative communication is adopted,
each transmission consists of two stages. In the first stage, the BS
communicates with RSs and users on some subcarriers. In the second stage,
the selected RS on the same subcarriers helps the BS transmit to the user
with decode-and-forward (DF) strategy\footnote{%
The performance of DF outperforms AF depending on channel conditions and
performance criterion. In this paper, we only consider DF strategy. The
similar technique can be applied to AF relay.} with the same codebook as
that of the BS. Let $\{\alpha _{i,n}^{m}:\alpha _{i,n}^{m}\in \left\{
0,1\right\} ,\forall n\in \mathcal{N},i\in \mathcal{R},m\in \mathcal{M}\}$
denote the set of RS, user and subcarrier matching indicator, where $\alpha
_{i,n}^{m}=1$ indicates that the BS can communicate with user $n$ via relay $%
i$ over the $m^{\text{th}}$ subcarrier, and $\alpha _{i,n}^{m}=0$ otherwise.
Note that $\alpha _{B,n}^{m}=1$ means that the BS will communicate to user $%
n $ over subcarrier $m$ directly and $\alpha _{B,n}^{m}=0$ otherwise. An
exclusive subcarrier assignment is enforced such that at most one user or
one user-relay pair can be allocated to a single subcarrier. Also, on one
subcarrier, only one transmission mode (cooperative or direct) is allowed.%
\textit{\ }These constraints are as follows,$\forall n\in \mathcal{N},i\in
\mathcal{R},m\in \mathcal{M}$
\begin{equation}
\sum_{n\in \mathcal{N}}\sum_{i\in \mathcal{R}}\left( \alpha
_{i,n}^{m}+\alpha _{B,n}^{m}\right) \leq 1,\text{ }\alpha _{i,n}^{m}\in
\left\{ 0,1\right\} ,\alpha _{B,n}^{m}\in \left\{ 0,1\right\} .  \label{2}
\end{equation}

We use $c_{B,n}^{m}$ to denote the achievable rate by the BS transmitting to
user $n$ directly over the $m^{\text{th}}$ subcarrier. We have
\begin{equation*}
c_{B,n}^{m}=\log _{2}\left( 1+\frac{p_{B}^{m}\left\vert
h_{B,n}^{m}\right\vert ^{2}}{\Gamma N_{0}}\right)
\end{equation*}%
where $p_{B}^{m}$ denotes the BS's transmission power on the $m^{\text{th}}$
subcarrier and $h_{B,n}^{m}$ is the channel gain from the BS to user $n$
over subcarrier $m$. $\Gamma $ is the gap to Shannon capacity, which is
mainly determined by modulation techniques and the target bit-error rate.
When the BS is communicating with user $n$ assisted by RS $i$ over the $m^{%
\text{th}}$ subcarrier, the achievable rate is given by \cite{Laneman},
\begin{eqnarray}
c_{i,n}^{m} &=&\frac{1}{2}\min \{\log _{2}(1+\frac{p_{B}^{m}\left\vert
h_{B,i}^{m}\right\vert ^{2}}{\Gamma N_{0}}),  \label{9} \\
&&\log _{2}(1+\frac{p_{B}^{m}\left\vert h_{B,n}^{m}\right\vert
^{2}+p_{i,n}^{m}\left\vert h_{i,n}^{m}\right\vert ^{2}}{\Gamma N_{0}})\}
\notag
\end{eqnarray}%
where "$\frac{1}{2}$" in $\left( \ref{9}\right) $ is due to the half duplex
forward constraint, $h_{B,i}^{m}$ is the channel gain from the BS to RS $i$
on subcarrier $m,$ $h_{i,n}^{m}$ is the channel gain from RS $i$ to user $n$
on the $m^{\text{th}}$ subcarrier and $p_{i,n}^{m}$ is the allocated relay
power by RS $i$ for user $n$ over the $m^{\text{th}}$ subcarrier. \ For
notional simplicity we define $H_{B,n}^{m}\hat{=}\left\vert
h_{B,n}^{m}\right\vert ^{2}/\Gamma N_{0},$ $H_{B,i}^{m}\hat{=}\left\vert
h_{B,i}^{m}\right\vert ^{2}/\Gamma N_{0},$ $H_{i,n}^{m}\hat{=}\left\vert
h_{i,n}^{m}\right\vert ^{2}/\Gamma N_{0},(\forall n\in \mathcal{N},i\in
\mathcal{R},m\in \mathcal{M})$ as the normalized channel gains for the rest
of the paper. We use $\mathcal{I}\left( t\right) =\{H_{B,n}^{m}\left(
t\right) ,$ $H_{B,i}^{m}\left( t\right) ,$ $H_{i,n}^{m}\left( t\right)
:\forall n\in \mathcal{N},i\in \mathcal{R},m\in \mathcal{M}\}$ to denote the
collection of\ all the channel state information at time slot $t$. It is
assumed that $\mathcal{I}\left( t\right) $ belongs to a finite\ space that
has arbitrarily large size and is independent and identically distributed
(i.i.d.) at every slot. Considering the BS may communicate with user $n$
directly or with the assistance of selected relay over some subcarriers, the
transmission rate for user $n$ is
\begin{equation}
\mu _{n}\left( t\right) =\sum_{m\in \mathcal{M}}\sum_{i\in \mathcal{R}%
}\left( \alpha _{i,n}^{m}c_{i,n}^{m}+\alpha _{B,n}^{m}c_{B,n}^{m}\right)
\label{44}
\end{equation}

Denote $p_{B}^{\max }$ and $p_{i}^{\max }$ as the total transmission power
constraint for the BS and RSs, respectively, $\mathit{i.e.,}$%
\begin{equation}
p_{B}\leq p_{B}^{\max },\text{ }p_{i}\leq p_{i}^{\max },\;\forall i\in
\mathcal{R},  \label{7}
\end{equation}%
where $p_{B}=\sum_{m\in \mathcal{M}}p_{B}^{m}$, $p_{i}=\sum_{n\in \mathcal{N}%
}\sum_{m\in \mathcal{M}}p_{i,n}^{m}.$ In addition, there are the power mask
constraints$\ \hat{P}\left( m\right) \footnote{%
In reality, the power mask should be location-dependent and each RS and BS
should have different\ mask even over the same subcarrier. Here for easy
expression, we use the same power mask for all RSs and BS over the same
subcarrier. To avoid trivial solutions, it is assumed that $\sum_{m\in
\mathcal{M}}\hat{P}\left( m\right) >p_{B}^{\max },$ $N\sum_{m\in \mathcal{M}}%
\bar{P}\left( m\right) >p_{i}^{\max },\forall i\in \mathcal{R}.$}$ on each
subcarrier for the BS and RSs\ to avoid the adjacent cell interference, $%
\mathit{i.e.,}$%
\begin{equation}
0\leq p_{B}^{m}\leq \hat{P}\left( m\right) ,\;0\leq p_{i,n}^{m}\leq \hat{P}%
\left( m\right) .  \label{45}
\end{equation}%
The determination of $\hat{P}\left( m\right) $ is out of the scope of this
paper and is given a priori.

\subsection{Queueing Model}

At the BS there are random arrival packets with rate $A_{n}\left( t\right) $
packets/slot at the end of\ time slot $t$ waiting for transmission to user $%
n.$ Each user's packets are stored in one of $N$ data queues corresponding
to each destination before they can be sent out. The packet arrival process $%
A_{n}\left( t\right) \in \left[ 0,A_{n}^{\max }\right] $ is assumed to be\
i.i.d over each time slot and $\mathbb{E}\left[ A_{n}\left( t\right) \right]
=\lambda _{n},$ $\forall n\in \mathcal{N}.$\ Among the arrival packets, only
$R_{n}\left( t\right) $ of $A_{n}\left( t\right) $ are admitted into each
user's\ buffer with queue length $Q_{n}\left( t\right) $ in time slot $t$ by
a flow control mechanism. The data queue is updated as
\begin{equation}
Q_{n}\left( t+1\right) =\left[ Q_{n}\left( t\right) -\mu _{n}\left( t\right) %
\right] ^{+}+R_{n}\left( t\right) ,\text{ }\forall n\in \mathcal{N},
\label{33}
\end{equation}%
where $\left[ x\right] ^{+}=\max \left\{ 0,x\right\} $, and $\mu _{n}\left(
t\right) $ is the service rate of user $n$ given in $\left( \ref{44}\right) $%
. To quantitatively measure the balance between the transmission request $%
R_{n}\left( t\right) $ and service rate $\mu _{n}\left( t\right) ,$ the
definition of stability is given first.

\begin{definition}[\protect\cite{Neely_book2006}]
A queue $Q\left( t\right) $ is strongly stable if:
\begin{equation*}
\underset{t\rightarrow \infty }{\lim \sup }\frac{1}{t}\sum_{\tau =0}^{t-1}%
\mathbb{E}\left\{ Q\left( \tau \right) \right\} <\infty
\end{equation*}%
A network is strongly stable if all individual queues of the network are
strongly stable.
\end{definition}

In the following, strong stability is also referred to stability.

\subsection{Energy Management Model}

Actually, the power consumption of BS or\ RSs consists of two parts. One
part is the dynamic\ transmission power of the power amplifier denoted as $p$
and the other part is $\Delta p,$\ the static power consumption of the radio
frequency chains including power dissipation in the mixer, transmit filters
and digital-to-analog converters \cite{ng} \cite{HuangHUang}. Thus the power
consumption of the BS and relay $i$ can be expressed as $P_{B}=p_{B}+\Delta
p_{B}$ and $P_{i}=p_{i}+\Delta p_{i}$, respectively. Since the BS dominates
the energy consumption in cellular networks \cite{JianTang_infocom2014}, its
required energy can be taken not only from the renewable energy generator
but also from the power grid. Due to the investment cost each RS is only
powered by grid. To "smooth" the intermittent renewable energy supply, we
introduce an energy storage device to the BS. The harvested renewable energy
is first put into the storage device and then withdrawn to supply the BS. If
the residual energy in the storage is not enough, the BS will be supplied by
the grid. Since the storage device has a limited size, we have to cope with
charging and discharging properly. Let $w\left( t\right) $ with $w\left(
t\right) \in \left[ 0,w^{\max }\right] $ denote the harvested energy from
the\ renewable source at time slot $t$. Only $\delta \left( t\right) w\left(
t\right) $ of the renewable energy is actually charged into the storage
device, where $\delta \left( t\right) \in \left[ 0,1\right] $.\ The energy
level $S\left( t\right) $ at the storage device is updated as an energy
queue \cite{MIng_li}:
\begin{equation}
S\left( t+1\right) =S\left( t\right) -O\left( t\right) +\delta \left(
t\right) w\left( t\right) ,  \label{3}
\end{equation}%
where $O\left( t\right) $ is the total power withdrawn from the storage
device to supply the BS at time slot $t$. As it can be found later, the
energy management of the BS is designed based on $S\left( t\right) $ instead
of the profile of $w\left( t\right) .$\ There is a maximum discharge rate
constraint $O^{\max }$ for the storage device.\ The energy level in the
storage device should be always available and can not exceed the storage
capacity $S_{{}}^{\max }$. Therefore, at each time slot, we should ensure
that
\begin{equation}
O\left( t\right) \leq \min \left\{ S\left( t\right) ,O^{\max }\right\} .
\label{4}
\end{equation}%
\begin{equation}
w\left( t\right) \leq \min \left\{ w^{\max },S_{{}}^{\max }-S\left( t\right)
\right\} .  \label{5}
\end{equation}

The BS's power consumption $P_{B}\left( t\right) $ is supplied by $O\left(
t\right) $ from the storage device and $J\left( t\right) $ from power grid
directly with $0\leq J\left( t\right) \leq J^{\max }$, \textit{i.e., }%
\begin{equation}
P_{B}\left( t\right) =O\left( t\right) +J\left( t\right) .  \label{6}
\end{equation}%
$J^{\max }$ is used to take into account the hardware and line constraint of
BS power supply\ system. Thus the total energy withdrawn from the power grid
at time slot $t$ is $P\left( t\right) =J\left( t\right) +\sum_{i\in \mathcal{%
R}}P_{i}\left( t\right) $. For the BS being able to be supplied by storage
device or power grid independently, it is assumed that $J_{{}}^{\max }\geq
p_{B}^{\max }+\Delta p_{B}$ and $O_{{}}^{\max }\geq p_{B}^{\max }+\Delta
p_{B}.$ The important notations of this paper are listed in Table \ref{table
1}.

\begin{center}
%TCIMACRO{\TeXButton{B}{\begin{table}[tbp] \centering}}%
%BeginExpansion
\begin{table}[tbp] \centering%
%EndExpansion
\caption{Notations}\label{table 1}
\begin{tabular}{|l|l|}
\hline
Symbol & Definition \\ \hline
$\mathcal{N}=\left\{ 1,\cdots ,N\right\} $ & Set of users \\
$\mathcal{R}=\left\{ 1,\cdots ,K\right\} $ & Set of RSs \\
$\mathcal{M}=\left\{ 1,\cdots ,M\right\} $ & Set of subcarriers \\
$\alpha _{B,n}^{m},\alpha _{i,n}^{m}$ & Subcarrier allocation indicator \\
$c_{B,n}^{m},c_{i,n}^{m}$ & Direct and cooperative transmission rates \\
$H_{B,n}^{m},H_{B,i}^{m},H_{i,n}^{m}$ & Normalized channel gains \\
$\mathcal{I}\left( t\right) $ & Collection of all channel states \\
$\mu _{n}\left( t\right) $ & Transmission rate for user $n$ \\
$p_{B}^{m}$ & Transmission power of BS over subcarrier $m$ \\
$p_{i,n}^{m}$ & Transmission power of RS over subcarrier $m$ \\
$X_{n}\left( t\right) ,R_{n}\left( t\right) $ & Potential admitted rate and
actual admitted rate \\
$Q_{n}\left( t\right) $ & Queue length of user $n$ at BS \\
$U_{n}\left( t\right) $ & Virtual queue length of user $n$ at BS \\
$S\left( t\right) $ & Energy level at BS storage device \\
$O\left( t\right) $ & Withdraw power from storage device \\
$\omega \left( t\right) ,\delta \left( t\right) \omega \left( t\right) $ &
Harvested and actual stored renewable energy \\
$J\left( t\right) $ & Energy drawn from the grid to supply BS \\ \hline
\end{tabular}%
%TCIMACRO{\TeXButton{E}{\end{table}}}%
%BeginExpansion
\end{table}%
%EndExpansion
\vspace{-0.5cm} %\vspace{-0.8cm}
\end{center}

\section{Problem Formulation}

The objective of the network operator is to maximize the total utility of
the average throughput for each user with the least on-grid energy
consumption. Here the utility function $f\left( \cdot \right) $ is assumed
to be non-decreasing, non-negative and concave, which is used to maintain
fairness in resource allocation \cite{Neely_book2006}. Since there are some
random events in the system, such as random arrival packets towards each
user and renewable energy generation,\ it is more reasonable to optimize the
objective in a long-term manner. Basically the system stability should be
maintained. Since the maximum system utility and the least on-grid energy
consumption are usually two conflicting objectives, we will find a policy to
solve the following stochastic optimization problem to achieve optimal
tradeoff between them,

\begin{equation*}
\left( P1\right)
\begin{array}{ll}
\max & \phi \sum_{n\in \mathcal{N}}f\left( \bar{R}_{n}\right) -\varphi \bar{P%
} \\
\text{s.t.} & \text{C1: }\bar{R}_{n}\leq \lambda _{n},\text{ }\forall n\in
\mathcal{N} \\
& \text{C2: }\left( \ref{2}\right) ,\left( \ref{7}\right) ,\left( \ref{45}%
\right) ,\left( \ref{3}\right) ,\left( \ref{4}\right) ,\left( \ref{5}\right)
,\left( \ref{6}\right) ,\forall t \\
& \text{C3: the system is stable}%
\end{array}%
\end{equation*}%
where
\begin{equation*}
\bar{R}_{n}\hat{=}\lim_{t\rightarrow \infty }\frac{1}{t}\sum_{\tau =0}^{t-1}%
\mathbb{E}\left\{ R_{n}\left( \tau \right) \right\}
\end{equation*}%
is the time-average admitted rate of user $n$, and
\begin{equation*}
\bar{P}\hat{=}\lim_{t\rightarrow \infty }\frac{1}{t}\sum_{\tau =0}^{t-1}%
\mathbb{E}\left\{ P\left( \tau \right) \right\}
\end{equation*}%
is the time-average on-grid energy consumption and the expectations are
taken with respect to the random events. $\phi $ and $\varphi $ are two
constants introduced to make a tradeoff between throughput utility and
on-grid energy consumption. Denote the optimal objective value of $\left(
P1\right) $ as $\mathcal{V}_{1}.$ Due to C1 and the bounded on-grid energy
consumption, we can always find a constant $\mathcal{V}^{\max }$ such that $%
\mathcal{V}_{1}\leq \mathcal{V}^{\max }.$ In $\left( P1\right) $ if $f\left(
\bar{R}_{n}\right) $ is chosen as $\bar{R}_{n},$\ the objective function of $%
\left( P1\right) $ is also known as the parametric form in solving the
energy efficiency maximization problem \cite{Sheng_Min}.

In problem $\left( P1\right) ,$ the current energy management policy is
coupled with future ones due to the iteration $\left( \ref{3}\right) ,$
\emph{i.e.}, the current energy management policy may have impacts on the
future energy charging or discharging action for the energy storage device.
This coupling nature together with unknown statistics of renewable energy
generation makes $\left( P1\right) $ difficult. Denote $\bar{O}\hat{=}%
\underset{t\rightarrow \infty }{\lim }\frac{1}{t}\sum\limits_{\tau =0}^{t-1}%
\mathbb{E}\left\{ O\left( \tau \right) \right\} ,$ $\bar{\delta}\bar{w}\hat{=%
}\underset{t\rightarrow \infty }{\lim }\frac{1}{t}\sum\limits_{\tau =0}^{t-1}%
\mathbb{E}\{\delta \left( \tau \right) w\left( \tau \right) \}$ as the
time-average value of the expected energy withdrawn from the storage device
and the input\ renewable energy, respectively. By summing $\left( \ref{3}%
\right) $ from the\ initial state to $t-1$ and taking expectations on both
sides,\ we have
\begin{equation}
\mathbb{E}\left\{ S\left( t\right) \right\} -S\left( 0\right) =\sum_{\tau
=0}^{t-1}\mathbb{E}\left\{ -O\left( \tau \right) +\delta \left( \tau \right)
w\left( \tau \right) \right\} .  \label{8}
\end{equation}%
Since the energy level in the storage device satisfies $0\leq S_{{}}\left(
t\right) \leq S_{{}}^{\max }$, we have $\bar{O}=\bar{\delta}\bar{w}$ after
dividing both sides of $\left( \ref{8}\right) $ by $t$\ and\ taking
limitation of $t\rightarrow \infty $. Since the objective function of $%
\left( P1\right) $ is a function of the time average, to make it tractable
we transfer it to an optimization problem with a time averaged objective\
function. We instead consider $\left( P2\right) $ to tackle the
time-coupling difficulties in $\left( \ref{3}\right) $ and the\ objective
function in $\left( P1\right) $.%
\begin{equation*}
\left( P2\right)
\begin{array}{ll}
\max & \phi \sum_{n\in \mathcal{N}}\overline{f\left( X_{n}\right) }-\varphi
\bar{P} \\
\text{s.t.} & \text{C1': }\bar{X}_{n}\leq \bar{R}_{n},\text{ }\forall n\in
\mathcal{N} \\
& \text{C2': }X_{n}\left( t\right) \leq A_{n}^{\max },\forall t,n\in
\mathcal{N} \\
& \text{C3': }\left( \ref{2}\right) ,\left( \ref{7}\right) ,\left( \ref{45}%
\right) ,\left( \ref{5}\right) ,\left( \ref{6}\right) ,\forall t \\
& \text{C3: the system is stable} \\
& \text{C4: }\bar{O}=\bar{\delta}\bar{w}%
\end{array}%
\end{equation*}%
where $X_{n}\left( t\right) $ is an auxiliary variable, and $\overline{%
\!f\left( \!X_{n}\!\right) }\!\hat{=}\!\lim_{t\!\rightarrow \!\infty }\!%
\frac{1}{t}\!\sum_{\!\tau =0}^{t-1}\!\mathbb{E}\!\!\left\{ \!f\!\left(
\!X_{n}\!\left( t\!\right) \!\right) \!\right\} ,$ $\!\bar{X}_{n}\!$ $\!\hat{%
=}\!\lim_{\!t\rightarrow \!\infty }\!\!\frac{1}{t}\!\!\sum_{\!\tau
=0\!}^{t-1}\!\mathbb{E}\!\left\{ \!X_{n}\!\left( \!t\right) \!\right\} ,$
which can be understood as the lower bound of $\bar{R}_{n}$ as shown in C1'.
In the following section, we will use virtual queue technique to satisfy the
time average constraint C1'.\ Denote the optimal objective value of $\left(
P2\right) $ as $\mathcal{V}_{2}$. By Ch5 of \cite{Neely_book_2010}, we have $%
\mathcal{V}_{2}\geq \mathcal{V}_{1}.$ Now, there is no time correlation on
the energy level of storage device in $\left( P2\right) $. Using the theory
of Lyapunov optimization, $\left( P2\right) $ can be solved by an online
algorithm to achieve the close-to-optimal performance subject to system
stability.

\section{Joint Flow Control, Resource Allocation and Energy Management}

Since the buffer size at the BS is finite, to ensure that each data queue $%
Q_{n}\left( t\right) $ at the BS has a deterministic upper bound $Q^{\max }$
with $Q^{\max }\geq A_{{}}^{\max },$ where $A_{{}}^{\max }=\max_{n}\left\{
A_{n}^{\max }\right\} $, we introduce the following virtual queue,
\begin{equation}
U_{n}\left( t+1\right) =\left[ U_{n}\left( t\right) -R_{n}\left( t\right) %
\right] ^{+}+X_{n}\left( t\right) ,\text{ }\forall n\in \mathcal{N}
\label{1}
\end{equation}%
whose stability can satisfy the constraint C1' in $\left( P2\right) .$

It is observed that in $\left( P2\right) ,$ the decision variables $\left\{
R_{n}\left( t\right) \right\} ,\left\{ \alpha _{B,n}^{m},\alpha
_{i,n}^{m}\left( t\right) \right\} ,\left\{ P\left( t\right) \right\}
,\left\{ O\left( t\right) \right\} $ and $\left\{ \delta \left( t\right)
\right\} $ are coupled together. We decompose the system problem $\left(
P2\right) $ into joint\ Flow control,\ Resource allocation and Energy
managEment (\emph{FREE}) by Lyapunov optimization method in each time slot.
Its design principle is to minimize the upper bound of Lyapunov drift minus
system objective function greedily without knowing the profile of\
stochastic events. The detailed derivation can be found in the proof of
Theorem \ref{Theo:Nearoptimal}. Fig. \ref{Fig. 11} illustrates that the
coupled dynamics in \emph{FREE}\ are coordinated by queue states $\left\{
U_{n}\left( t\right) ,X_{n}\left( t\right) ,S\left( t\right) \right\} .$
Although \emph{FREE }are designed to solve $\left( P2\right) $ online and
separately, Theorem \ref{Theo:Nearoptimal} demonstrates that it can approach
the optimal solution of $\left( P1\right) $ asymptotically. Its detailed
design is as follows. In the following, the joint flow control, wireless
resource allocation and energy management in \emph{FREE}\ will be described
in detail. %
%\FRAME{ftbpFU}{3.5466in}{1.4883in}{0pt}{\Qcb{Diagram of coupled problems of
%flow control, wireless resource allocation and energy management.}}{\Qlb{%
%Fig. 10}}{Figure}{\special{language "Scientific Word";type
%"GRAPHIC";maintain-aspect-ratio TRUE;display "USEDEF";valid_file "T";width
%3.5466in;height 1.4883in;depth 0pt;original-width 8.5625in;original-height
%3.5786in;cropleft "0";croptop "1";cropright "1";cropbottom "0";tempfilename
%'NXU51S00.wmf';tempfile-properties "XPR";}}
\begin{figure}[tbph]
% Requires \usepackage{graphicx}
\centering
\includegraphics[width=9.5cm]{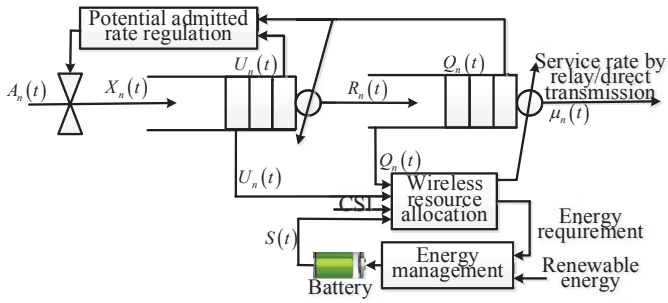}
\caption{Diagram of coupled problems of flow control, wireless resource
allocation and energy management.}
\label{Fig. 11}
\end{figure}

\subsection{Flow Control}

At each time slot $t$, the BS decides the admission data rate $\left\{
R_{n}\left( t\right) \right\} $ to\ the data buffer of each user as the
solution to the following problem,
\begin{equation}
\begin{array}{ll}
\min_{R_{n}\left( t\right) } & R_{n}\left( t\right) \left( Q_{n}\left(
t\right) -Q^{\max }+A^{\max }\right) \\
\text{s.t.} & 0\leq R_{n}\left( t\right) \leq A_{n}\left( t\right)%
\end{array}%
.  \label{59}
\end{equation}%
The problem $\left( \ref{59}\right) $ is\ corresponding to one part of the
upper bound of Lyapunov drift plus penalty in $\left( \ref{43}\right) .$\
The solution to the above problem is
\begin{equation}
R_{n}\left( t\right) =\left\{
\begin{array}{cc}
0 & \text{if }Q_{n}\left( t\right) >Q^{\max }-A^{\max } \\
A_{n}\left( t\right) & \text{otherwise}%
\end{array}%
\right. .  \label{32}
\end{equation}%
In $\left( \ref{32}\right) ,$ the tradeoff between system utility and
on-grid energy is not explicitly considered. It can be intuitively
understood that more saved on-grid energy will result in a smaller $\mu
_{n}\left( t\right) $ and a larger $Q_{n}\left( t\right) $ subsequently by $%
\left( \ref{33}\right) $. Then according to $\left( \ref{32}\right) ,$ more
arrival packets will be dropped in order to mitigate congestion. Thus,
on-grid energy is saved with a compromise of throughput.

The potential admitted rate $X_{n}\left( t\right) $ is the solution to the
following problem,
\begin{equation}
\begin{array}{ll}
\min_{X_{n}\left( t\right) } & \frac{Q^{\max }-A^{\max }}{Q^{\max }}%
U_{n}\left( t\right) \cdot X_{n}\left( t\right) \!-\!V\phi f\left(
X_{n}\left( t\right) \right) \\
\text{s.t.} & 0\leq X_{n}\left( t\right) \leq A^{\max }%
\end{array}%
.  \label{60}
\end{equation}%
If $f\left( \cdot \right) $ is differentiable, $X_{n}\left( t\right) $ is
updated as
\begin{equation}
X_{n}\!\left( t\!\right) \!=\!\min \!\left\{ \!\max \!\left\{ \!0,f^{\prime
-1}\!\left( \!\frac{Q^{\max }\!-\!A^{\max }}{Q^{\max }V\phi }U_{n}\!\left(
t\right) \!\right) \!\right\} ,\!A^{\max }\!\right\} ,  \label{34}
\end{equation}%
where $V\geq 0$ is a parameter set according to the system performance
requirement and is to be specified in Section VI.

\subsection{Wireless Resource Allocation}

We determine the power control, RS selection and subcarrier allocation by
solving the following optimization problem based on the measurement of
channel states and queue states at the beginning of each time slot.
\begin{equation}
\begin{array}{cc}
\max &
\begin{array}{c}
\sum_{n\in \mathcal{N}}^{{}}\left( \frac{U_{n}\left( t\right) Q_{n}\left(
t\right) }{Q^{\max }}\mu _{n}\left( t\right) +\left( S\left( t\right)
-\theta \right) p_{B}\left( t\right) \right) \\
-V\varphi \sum_{i\in \mathcal{R}}^{{}}p_{i}\left( t\right)%
\end{array}
\\
\text{s.t.} & \left( \ref{2}\right) ,\left( \ref{7}\right) ,\left( \ref{45}%
\right)%
\end{array}%
.  \label{23}
\end{equation}%
The optimization problem $\left( \ref{23}\right) $ has zero-duality-gap if
the number of subcarrier goes to infinity \cite{Weiyu_2007} \cite%
{JianweiHUang}.\ We introduce the Lagrange multipliers $\lambda _{i}$ and $%
\lambda _{B}$ to relax the power constraints $\left( \ref{7}\right) $ and
solve $\left( \ref{23}\right) $ by Lagrangian\ dual decomposition. The
Lagrange function is
\begin{eqnarray}
&&L\left( \mathbf{\alpha },\mathbf{p,\lambda }\right) \\
&=&\sum_{n=1}^{N}(\frac{U_{n}\left( t\right) Q_{n}\left( t\right) }{Q^{\max }%
}\mu _{n}\left( t\right) +\left( S\left( t\right) -\theta \right)
p_{B}\left( t\right) )  \notag \\
&&-V\!\varphi\!\sum_{i=1}^{K}\!p_{i}\!\left( t\right)\!+\!\lambda
_{B}\!\left( p_{B}^{\max }\!-\!p_{B}\!\left(
t\right)\!\right)\!+\!\sum_{i=1}^{K}\!\lambda _{i}\!\left( p_{i}^{\max
}\!-\!p_{i}\!\left(t\right)\!\right) .  \notag
\end{eqnarray}%
The dual function is $D\left( \mathbf{\lambda }\right) =\max_{\mathbf{\alpha
,p}}L\left( \mathbf{\alpha },\mathbf{p,\lambda }\right) \footnote{%
Hereafter, we use bold letter to denote a vector, \textit{e.g., }$\mathbf{%
\lambda =}\left[ \lambda _{1},\lambda _{2},\cdots ,\lambda _{K},\lambda _{B}%
\right] .$},$ which can be decomposed into $M$ independent subproblems,%
\begin{equation}
\begin{array}{cc}
\max & \sum_{n\in \mathcal{N}}\sum_{i\in \mathcal{R}}\Omega _{m} \\
\text{s.t. } & \left( \ref{45}\right)%
\end{array}%
,\forall m\in \mathcal{M},  \label{10}
\end{equation}%
where
\begin{eqnarray}
&&\Omega _{m}  \label{46} \\
&=&\frac{U_{n}Q_{n}}{Q^{\max }}\alpha _{B,n}^{m}c_{B,n}^{m}\left( t\right)
+\alpha _{B,n}^{m}\left( S\left( t\right) -\theta -\lambda _{B}\right)
p_{B}^{m}\left( t\right)  \notag \\
&\hat{=}&\Omega _{m}\left( B,n\right)  \notag
\end{eqnarray}%
if direct transmission towards user $n$ is adopted over subcarrier $m$, or%
\begin{eqnarray}
&&\Omega _{m}  \label{49} \\
&=&\frac{U_{n}Q_{n}}{Q^{\max }}\alpha _{i,n}^{m}c_{i,n}^{m}\left( t\right)
+\alpha _{i,n}^{m}\left( S\left( t\right) -\theta -\lambda _{B}\right)
p_{B}^{m}\left( t\right)  \notag \\
&&-\alpha _{i,n}^{m}\left( V\varphi +\lambda _{i}\right) p_{i,n}^{m}\left(
t\right)  \notag \\
&\hat{=}&\Omega _{m}\left( i,n\right) ,  \notag
\end{eqnarray}%
if relay-assisted transmission is chosen.

We first consider the direct transmission $\left( \ref{46}\right) $ to find
the optimal $\Omega _{m}^{\ast }\left( B,n\right) ,$%
\begin{equation*}
\begin{array}{cc}
\underset{0\leq p_{B}^{m}\leq \bar{P}\left( m\right) }{\max } & \frac{%
U_{n}\left( t\right) Q_{n}\left( t\right) }{Q^{\max }}c_{B,n}^{m}\left(
t\right) +\left( S\left( t\right) -\theta -\lambda _{B}\right)
p_{B}^{m}\left( t\right)%
\end{array}%
.
\end{equation*}%
The optimal\ transmission power of the BS over subcarrier $m$ is obtained as
\begin{equation}
p_{B}^{m}\left( t\right) =\left\{
\begin{array}{ll}
\hat{p}_{B}^{m}\left( t\right) & \text{otherwise} \\
\bar{P}\left( m\right) & \text{if }\theta +\lambda _{B}-S\left( t\right)
\leq 0%
\end{array}%
\right. ,  \label{47}
\end{equation}%
where $\hat{p}_{B}^{m}\left( t\right) =\left( \frac{U_{n}\left( t\right)
Q_{n}\left( t\right) }{Q^{\max }\left( \theta +\lambda _{B}-S\left( t\right)
\right) \log 2}-\frac{1}{H_{B,n}^{m}}\right) _{0}^{\bar{P}\left( m\right) }$
and$\ \left( x\right) _{a}^{b}=\min \left\{ b,\max \left\{ a,x\right\}
\right\} .$ In the case of $\theta +\lambda _{B}-S\left( t\right) >0$, $%
p_{B}^{m}\left( t\right) $ in $\left( \ref{47}\right) $ is obtained by
solving the derivative of $\Omega _{m}\left( B,n\right) $ to $%
p_{B}^{m}\left( t\right) $ being equal to zero. After substituting $\left( %
\ref{47}\right) $ into $\left( \ref{46}\right) $ we can obtain the optimal $%
\Omega _{m}^{\ast }\left( B,n\right) $ given that the BS transmits to user $%
n $ over subcarrier $m$ directly.

In the case of cooperative transmission, we consider the following two cases
$\left( \ref{11}\right) $ and $\left( \ref{12}\right) $ for the objective
function $\left( \ref{49}\right) $.
\begin{equation}
\begin{array}{cc}
\max &
\begin{array}{c}
q_{n}\left( t\right) \log \left(
1+p_{B}^{m}H_{B,n}^{m}+p_{i,n}^{m}H_{i,n}^{m}\right) \\
+\left( S\left( t\right) -\theta -\lambda _{B}\right) p_{B}^{m}-\left(
V\varphi +\lambda _{i}\right) p_{i,n}^{m}%
\end{array}
\\
\text{s.t.} &
p_{B}^{m}H_{B,i}^{m}>p_{B}^{m}H_{B,n}^{m}+p_{i,n}^{m}H_{i,n}^{m} \\
& 0\leq p_{B}^{m}\left( t\right) \leq \bar{P}\left( m\right) ,\;0\leq
p_{i,n}^{m}\left( t\right) \leq \bar{P}\left( m\right)%
\end{array}%
,  \label{11}
\end{equation}%
\begin{equation}
\begin{array}{cc}
\max &
\begin{array}{c}
q_{n}\left( t\right) \log \left( 1+p_{B}^{m}H_{B,i}^{m}\right) \\
+\left( S\left( t\right) -\theta -\lambda _{B}\right) p_{B}^{m}-\left(
V\varphi +\lambda _{i}\right) p_{i,n}^{m}%
\end{array}
\\
\text{s.t.} & p_{B}^{m}H_{B,i}^{m}\leq
p_{B}^{m}H_{B,n}^{m}+p_{i,n}^{m}H_{i,n}^{m} \\
& 0\leq p_{B}^{m}\left( t\right) \leq \bar{P}\left( m\right) ,\;0\leq
p_{i,n}^{m}\left( t\right) \leq \bar{P}\left( m\right)%
\end{array}%
,  \label{12}
\end{equation}%
where $q_{n}\left( t\right) =\frac{U_{n}\left( t\right) Q_{n}\left( t\right)
}{2Q^{\max }\log 2}.$

\begin{theorem}
\label{Theo:opt_power} The optimal transmission power allocation for problem
$\left( \ref{11}\right) $ is as follows:
\end{theorem}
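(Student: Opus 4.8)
The plan is to treat $\left(\ref{11}\right)$ as a convex program and pin down the maximizer from the geometry of the active constraints. The objective is concave in $\left(p_{B}^{m},p_{i,n}^{m}\right)$, being $\log$ of an affine function plus two linear terms, and the feasible set is the intersection of the half-space $p_{B}^{m}H_{B,i}^{m}>p_{B}^{m}H_{B,n}^{m}+p_{i,n}^{m}H_{i,n}^{m}$ with the box $\left[0,\bar{P}\left(m\right)\right]^{2}$, hence convex (feasibility tacitly presumes $H_{B,i}^{m}>H_{B,n}^{m}$, the relaying-useful regime). Thus the KKT conditions are both necessary and sufficient, and it suffices to determine which constraints are active at the optimum.

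The engine of the proof is that an interior stationary point is \emph{non-generic}. Zeroing the two partial derivatives gives $q_{n}H_{B,n}^{m}/\left(1+p_{B}^{m}H_{B,n}^{m}+p_{i,n}^{m}H_{i,n}^{m}\right)=\theta+\lambda_{B}-S\left(t\right)$ and $q_{n}H_{i,n}^{m}/\left(1+p_{B}^{m}H_{B,n}^{m}+p_{i,n}^{m}H_{i,n}^{m}\right)=V\varphi+\lambda_{i}$; clearing the common denominator forces $\Lambda\hat{=}H_{i,n}^{m}\left(S\left(t\right)-\theta-\lambda_{B}\right)+H_{B,n}^{m}\left(V\varphi+\lambda_{i}\right)=0$, a measure-zero event over the continuous channel law. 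Equivalently, freezing the received power $z=p_{B}^{m}H_{B,n}^{m}+p_{i,n}^{m}H_{i,n}^{m}$ and eliminating $p_{i,n}^{m}$, the linear part collapses to an affine function of $p_{B}^{m}$ with slope $\Lambda/H_{i,n}^{m}$. Hence $\mathrm{sign}\left(\Lambda\right)$ decides, at fixed $z$, whether power should be loaded onto the BS or onto the relay, and it drives the solution to a boundary.

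I would then split on $\mathrm{sign}\left(\Lambda\right)$. When $\Lambda\geq0$ the BS link is favored, so relay power is spent only once the BS cap saturates: the optimum is $p_{i,n}^{m}=0$ with $p_{B}^{m}$ a box-projected water-filling level while the cap is slack, and $p_{B}^{m}=\bar{P}\left(m\right)$ together with a second water-filling level for $p_{i,n}^{m}$ otherwise. When $\Lambda<0$ the relay constraint binds, so I would substitute $p_{i,n}^{m}H_{i,n}^{m}=p_{B}^{m}\left(H_{B,i}^{m}-H_{B,n}^{m}\right)$ to reduce to a single concave problem in $p_{B}^{m}$ over the interval carved out by the two box constraints, again solved by zeroing the derivative and clipping to $\left[0,\bar{P}\left(m\right)\right]$. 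Every branch produces a water-filling closed form.

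The hard part is the bookkeeping, not any single derivative. Because the relay constraint is \emph{strict}, in the $\Lambda<0$ branch the supremum is only approached on the shared boundary $p_{B}^{m}H_{B,i}^{m}=p_{B}^{m}H_{B,n}^{m}+p_{i,n}^{m}H_{i,n}^{m}$, where the two arguments of the $\min$ in $\left(\ref{9}\right)$ meet those of $\left(\ref{12}\right)$; one must argue that the cooperative optimum is recovered as the larger of the $\left(\ref{11}\right)$ and $\left(\ref{12}\right)$ values and varies continuously across this seam. The remaining labor is verifying that the boundary configurations are exhaustive and mutually exclusive and that the KKT multipliers keep consistent signs where the coupling relay constraint collides with an active box constraint on $p_{B}^{m}$ or $p_{i,n}^{m}$.
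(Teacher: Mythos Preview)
Your KKT route is sound and lands on the same discriminant $\Lambda$, but the paper reaches the split by a cleaner device: it substitutes $d_{i,n}^{m}\hat{=}p_{B}^{m}H_{B,n}^{m}+p_{i,n}^{m}H_{i,n}^{m}$ and rewrites $\left(\ref{11}\right)$ in the variables $\left(p_{B}^{m},d_{i,n}^{m}\right)$. After this change the objective becomes \emph{additively separable}: linear in $p_{B}^{m}$ with coefficient exactly $\Lambda/H_{i,n}^{m}$, plus $q_{n}\log\left(1+d_{i,n}^{m}\right)-\left(V\varphi+\lambda_{i}\right)d_{i,n}^{m}/H_{i,n}^{m}$ in $d_{i,n}^{m}$ alone. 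The sign split is then immediate with no multiplier bookkeeping: $\Lambda\geq 0$ pushes $p_{B}^{m}$ to $\bar{P}\left(m\right)$ (increasing $p_{B}^{m}$ also relaxes the upper bound on $d_{i,n}^{m}$), after which $d_{i,n}^{m}$ is water-filled against its box; $\Lambda<0$ drives $p_{B}^{m}$ to $0$, and the constraint $d_{i,n}^{m}\leq p_{B}^{m}H_{B,i}^{m}$ collapses $d_{i,n}^{m}$ to $0$ as well. No KKT system, no seam argument, no exhaustive boundary enumeration. Your approach buys generality and makes the constraint interactions explicit; the paper's buys a two-line proof.

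One caution: your $\Lambda\geq 0$ branch produces a sub-case (``$p_{i,n}^{m}=0$ with $p_{B}^{m}$ a box-projected water-filling level while the cap is slack'') that is absent from the theorem, which asserts $p_{B}^{m}=\bar{P}\left(m\right)$ unconditionally there. The reason is that the paper's reformulation silently drops the lower bound $d_{i,n}^{m}\geq p_{B}^{m}H_{B,n}^{m}$ (i.e.\ $p_{i,n}^{m}\geq 0$) and then clips via $\left(\cdot\right)^{+}$; your KKT analysis tracks that constraint and is in fact more refined. The extra sub-case you find has the relay idle, so it is dominated by the direct-transmission branch $\Omega_{m}^{\ast}\left(B,n\right)$ in the downstream $\max$ of $\left(\ref{58}\right)$ and the algorithm is unaffected --- but be aware your argument will not reproduce the theorem's stated formula verbatim unless you adopt the same relax-then-project convention.
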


\begin{itemize}
\item If $S\left( t\right) -\theta -\lambda _{B}+\frac{\left( V\varphi
+\lambda _{i}\right) H_{B,n}^{m}}{H_{i,n}^{m}}\geq 0,$ $p_{B}^{m}\left(
t\right) =\bar{P}\left( m\right) ,\ p_{i,n}^{m}\left( t\right) =\left( \frac{%
d_{i,n}^{m}-\bar{P}\left( m\right) H_{B,n}^{m}}{H_{i,n}^{m}}\right) ^{+},\ $%
where $d_{i,n}^{m}=\max \{0,\min \{\frac{q_{n}\left( t\right) H_{i,n}^{m}}{%
V\varphi +\lambda _{i}}-1,\bar{P}\left( m\right) H_{B,i}^{m},\bar{P}\left(
m\right) H_{i,n}^{m}+\bar{P}\left( m\right) H_{B,n}^{m}\}\}.$

\item If $S\left( t\right) -\theta -\lambda _{B}+\frac{\left( V\varphi
+\lambda _{i}\right) H_{B,n}^{m}}{H_{i,n}^{m}}<0,$ $p_{B}^{m}\left( t\right)
=p_{i,n}^{m}\left( t\right) =0.$
\end{itemize}

The optimal solution to problem $\left( \ref{12}\right) $ is

\begin{itemize}
\item If $H_{B,i}^{m}>H_{B,n}^{m},$ $p_{B}^{m}\left( t\right)
=p_{i,n}^{m}\left( t\right) =0.$

\item If $H_{B,i}^{m}\leq H_{B,n}^{m},$ $p_{i,n}^{m}\left( t\right) =0,$ $%
p_{B}^{m}=\left\{
\begin{array}{cc}
\bar{P}\left( m\right) & \text{if }S\left( t\right) -\theta -\lambda
_{B}\geq 0 \\
\left( \frac{q_{n}\left( t\right) }{\lambda _{B}-S\left( t\right) +\theta }-%
\frac{1}{H_{B,i}^{m}}\right) _{0}^{\bar{P}\left( m\right) } & \text{otherwise%
}%
\end{array}%
\right. $.
\end{itemize}

\begin{proof}
The proof can be found in Appendix A.
\end{proof}

Some intuitive explanations for \emph{Theorem \ref{Theo:opt_power}} are
given below.

%\emph{Remark 1:}

\begin{remark}
\textit{\ }For the solution to problem $\left( \ref{11}\right) $, it can be
found that in the case of $S\left( t\right) -\theta -\lambda _{B}+\frac{%
\left( \varphi +\lambda _{i}\right) H_{B,n}^{m}}{H_{i,n}^{m}}\geq 0$ if the
second hop channel gain $H_{i,n}^{m}$ is relatively low compared with $%
H_{B,n}^{m}$, which may result in $d_{i,n}^{m}=0,$ the BS will ask no relay
for help transmission to avoid more electricity withdrawn from power\ grid
to supply the relay. Moreover, for the same case, if $q_{n}\left( t\right) $
is large enough, which may result in a large $d_{i,n}^{m},$ the BS has to
ask relay $i$ for helping transmission to alleviate congestion of user $n$'s
buffer. In the aforementioned situation, the BS transmits with $\bar{P}%
\left( m\right) $ since there is enough residual energy in the storage
device and the objective function of $\left( \ref{11}\right) $ is increasing
with respect to $p_{B}^{m}.$ If $S\left( t\right) -\theta -\lambda _{B}+%
\frac{\left( \varphi +\lambda _{i}\right) H_{B,n}^{m}}{H_{i,n}^{m}}<0$ due
to less residual energy in the storage device, the combination of\ relay $i,$
the BS and user $n$ is not suitable for transmission over subcarrier $m$ to
avoid withdrawing too much electricity from the grid.
\end{remark}

%\emph{Remark 2:}

\begin{remark}
\textit{\ }By the above description, it may be concluded some subcarriers
will be abandoned temporarily due to their low energy efficiency. At the
same time the abandoned subcarriers can be used by other cells belonging to
the same operator. The abandoned subcarriers can also be re-used by the same
cell after a while when the channel conditions turn to be better.\
Therefore, the network operator will not waste any resource or revenue from
a long term and global point of view.
\end{remark}

By substituting the optimal transmission power in \emph{Theorem \ref%
{Theo:opt_power}} to $\left( \ref{49}\right) $ the optimal $\Omega
_{m}^{\ast }\left( i,n\right) $ is obtained. Then the subcarrier allocation
and relay selection can be achieved by
\begin{eqnarray}
&&\alpha _{j,n}^{m}  \label{58} \\
&\!=\!&\!\left\{
\begin{array}{ll}
1 & \text{if }\!\left( \!j,\!n\!\right) \!\!=\!\!\underset{_{j\!\in
\!\left\{ \!B\!\right\} \!\cup \!\mathcal{R},\!n\in \mathcal{N}}}{\arg }\max
\!\left\{ \!\Omega _{m}^{\ast }\!\left( i,n\!\right) ,\!\Omega _{m}^{\ast
}\!\left( \!B,n\!\right) \!\right\} \\
0 & \text{otherwise}%
\end{array}%
\right.  \notag
\end{eqnarray}%
Note that in $\left( \ref{58}\right) $ $j$ can be the index of a RS or
referred to the BS if direct transmission is adopted.

The associated dual problem of $\left( \ref{23}\right) $ is $\min_{\mathbf{%
\lambda \preceq 0}}D\left( \mathbf{\lambda }\right) ,$ which can be solved
by standard subgradient method $\left( \ref{21}\right) $ and $\left( \ref{22}%
\right) $ $,\forall i\in \mathcal{R}$. \vspace{-0.2cm}
\begin{eqnarray}
&&\lambda _{B}\left( k+1,t\right)  \label{21} \\
&=&\left[ \lambda _{B}\left( k,t\right) +\varsigma _{k}\left(
\sum_{m=1}^{M}p_{B}^{m}\left( k,t\right) -p_{B}^{\max }\right) \right] ^{+}
\notag
\end{eqnarray}%
\begin{eqnarray}
&&\lambda _{i}\left( k+1,t\right)  \label{22} \\
&=&\left[ \lambda _{i}\left( k,t\right) +\varsigma _{k}\left(
\sum_{n=1}^{N}\sum_{m=1}^{M}\alpha _{i,n}^{m}p_{i,n}^{m}\left( k,t\right)
-p_{i}^{\max }\right) \right] ^{+},\text{ }  \notag
\end{eqnarray}%
where $k$ is the iteration index in time slot $t,$ $\left[ x\right] ^{+}\hat{%
=}\max \left\{ 0,x\right\} $ and $\varsigma _{k}$ is step size at the $k^{%
\text{th}}$ iteration. The convergence of $\left( \ref{21}\right) $ and $%
\left( \ref{22}\right) $ can be guaranteed by a diminishing step size $%
\varsigma _{k}$ \cite{Mikael_johansson}.

%\emph{Remark 3:}

\begin{remark}
In practice, the subproblems $\left( \ref{11}\right) $ and $\left( \ref{12}%
\right) $ are solved by each RS locally for $NM$ times during one iteration $%
k$ and the computational complexity at each relay is $O\left( NM\right) ,$
which is the same as \cite{Weiyu_2007}. And then each relay transmits the
optimal $\Omega _{m}^{\ast }\left( i,n\right) $ to the BS for the final
subcarrier allocation.
\end{remark}

The deployment of a certain amount RSs to proper place is also important in
utilizing the cooperative diversity and has been extensively investigated in
\cite{Shen_relay_renew} and \cite{NiuZhisheng}. A candidate place is where a
RS can not only serve users better than direct transmission but also be easy
to access on-grid power. Candidate place can be the roof of a building,
where RSs serve users in the building. Thus the number of candidate places
in a considered area is finite. Since the joint deployment of RS and
resource allocation problem is NP-hard, heuristic algorithm will be
considered. Firstly, all candidate places are deployed with RSs. Users
requiring relay are associated to the RS with the minimum \emph{RTE} which
is defined as the quotient of user's throughput requirement and achievable
throughput with unit transmission power and subcarrier. Then subcarriers are
allocated among RSs and BS to satisfy each user's throughput requirement
proportionally. The transmission power of BS and RSs can be obtained by
waterfilling. Secondly, calculate each RS's energy efficiency value, which
is defined as the total achievable throughput divided by the RS's energy
consumption. Sort all RSs in an ascending order based on its energy
efficiency value. Thirdly, try to delete the first RS from and candidate
place and re-connect its users to other RSs based on the \emph{RTE}
criterion in the first step. Subcarriers and transmission power are
re-allocated. If each user's throughput requirement is satisfied, try to
delete the second RS and repeat. If the removal of the $\left( k+1\right) ^{%
\text{th}}$ RS cannot satisfy users throughput requirements, the algorithm
outputs the deployment of $k$ RSs in the last round. Due to space
limitation, we do not evaluate the heuristic RS placement. Its idea is
similar with \cite{Shen_relay_renew}, which has demonstrated good
performance with low time complexity.

\subsection{Energy Management}

At each time slot $t$, the BS manages the energy level in its storage device
based on local information by solving the following problem:%
\begin{equation}
\begin{array}{cc}
\min_{J,\delta } & \left( \varphi V+S\left( t\right) -\theta \right) J\left(
t\right) +\left( S\left( t\right) -\theta \right) \delta \left( t\right)
w\left( t\right) \\
\text{s.t. } & \left( \ref{5}\right) ,\text{ }\left( \ref{6}\right) ,\text{ }%
\delta \left( t\right) \in \left[ 0,1\right]%
\end{array}%
,  \label{24}
\end{equation}%
where $\theta $ is a parameter to be specified in Section VI. The optimal
solution to $\left( \ref{24}\right) $ has the following threshold structure:

1) If $S\left( t\right) \geq \theta -\varphi V,$ then $J_{{}}^{\ast }=0,$ $%
O_{{}}^{\ast }=\min \left\{ p_{B}^{\ast }+\Delta p_{B}^{{}},O_{{}}^{\max
}\right\} ,$ where $p_{B}^{\ast }$ is the solution to $\left( \ref{23}%
\right) \ $and
\begin{equation}
\delta _{{}}^{\ast }\left( t\right) =\left\{
\begin{array}{cc}
1 & \text{if }0\leq S\left( t\right) <\theta \\
0 & \text{otherwise}%
\end{array}%
\right..  \label{61}
\end{equation}

2) If $S\left( t\right) <\theta -\varphi V,$ then $J_{{}}^{\ast }=\min
\left\{ p_{B}^{\ast }+\Delta p_{B}^{{}},J_{{}}^{\max }\right\} ,$ $%
O_{{}}^{\ast }=\max \left\{ 0,p_{B}^{\ast }+\Delta p_{B}^{{}}-J^{\ast
}\right\} ,$ $\delta _{{}}^{\ast }\left( t\right) =1.$

The solution to $\left( \ref{24}\right) $ depends on the current energy
level in the energy storage device without knowing any statistic information
on renewable energy.

\begin{remark}
The detailed implementation of the proposed joint policy \emph{FREE} can be
found in Fig. \ref{Fig. 1}. Although the \emph{FREE} policy seems to be
implemented separately, The proof of Theorem \ref{Theo:Nearoptimal}
demonstrates that the solutions to subproblems $\left( \ref{59}\right) ,$ $%
\left( \ref{60}\right) ,$ $\left( \ref{23}\right) $ and $\left( \ref{24}%
\right) $ minimize the upper bound of system Lyapunov drift plus cost
function greedily. Thus, the\textit{\ FREE }can solve the system
optimization problem $\left( P1\right) $ while maintaining stability at the
same time, which is formally stated in Proposition \ref{Prop:boundsQ} and
Theorem \ref{Theo:Nearoptimal}. % \FRAME{ftbpFU}{3.2379in}{2.1049in}{0pt}{%
%\Qcb{The system procedure of the \textit{FREE} scheme.}}{\Qlb{Fig. 1}}{flow
%chart free.bmp}{\special{language "Scientific Word";type
%"GRAPHIC";maintain-aspect-ratio TRUE;display "USEDEF";valid_file "F";width
%3.2379in;height 2.1049in;depth 0pt;original-width 13.4063in;original-height
%8.6879in;cropleft "0";croptop "1";cropright "1";cropbottom "0";filename
%'flow chart free.bmp';file-properties "XNPEU";}}
\begin{figure}[tbph]
% Requires \usepackage{graphicx}
\centering
\includegraphics[width=8.5cm]{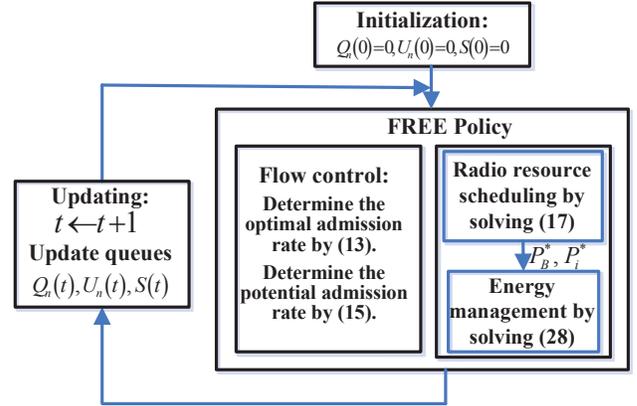}
\caption{The system procedure of the \textit{FREE} scheme.}
\label{Fig. 1}
\end{figure}
\end{remark}

\section{Algorithm Performance}

This section states the properties of proposed algorithm \emph{FREE}.

\begin{theorem}
\label{Theo:EnergyLevel} For $\theta \hat{=}\varphi V+O_{{}}^{\max }$ and
\begin{equation}
0<V\leq \frac{S_{{}}^{\max }-w^{\max }-O_{{}}^{\max }}{\varphi },  \label{26}
\end{equation}%
the energy level in the energy storage device satisfies
\begin{equation}
S\left( t\right) \in \left[ 0,S_{{}}^{\max }\right] \text{, }\forall t.
\label{27}
\end{equation}
\end{theorem}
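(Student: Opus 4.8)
The plan is to show that the energy-management rules from $\left( \ref{61}\right)$ and the two cases following $\left( \ref{24}\right)$, together with the dynamics $\left( \ref{3}\right)$, keep $S\left( t\right)$ inside $\left[ 0,S_{{}}^{\max }\right]$ for all $t$ by induction on $t$. The base case is the initial condition $S\left( 0\right) \in \left[ 0,S_{{}}^{\max }\right]$. For the inductive step I assume $S\left( t\right) \in \left[ 0,S_{{}}^{\max }\right]$ and examine how $S\left( t+1\right) =S\left( t\right) -O\left( t\right) +\delta \left( t\right) w\left( t\right)$ can move. The argument naturally splits according to the threshold structure of the optimal policy, namely whether $S\left( t\right) \geq \theta -\varphi V=O_{{}}^{\max }$ or $S\left( t\right) <O_{{}}^{\max }$, because these two regimes prescribe very different charging/discharging behaviour.

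For the \emph{lower bound} $S\left( t+1\right)\geq 0$, the key observation is that the policy only discharges when there is energy to discharge. In Case 1, where $S\left( t\right)\geq O_{{}}^{\max }$, the withdrawn power is $O\left( t\right)=\min\{p_{B}^{\ast}+\Delta p_{B},O_{{}}^{\max}\}\leq O_{{}}^{\max}\leq S\left( t\right)$, so $S\left( t+1\right)\geq S\left( t\right)-O_{{}}^{\max}\geq 0$ regardless of charging. In Case 2, where $S\left( t\right)<O_{{}}^{\max}$, I would use the explicit split $O_{{}}^{\ast}=\max\{0,p_{B}^{\ast}+\Delta p_{B}-J^{\ast}\}$ together with $J^{\ast}=\min\{p_{B}^{\ast}+\Delta p_{B},J_{{}}^{\max}\}$: since the assumption $J_{{}}^{\max}\geq p_{B}^{\max}+\Delta p_{B}\geq p_{B}^{\ast}+\Delta p_{B}$ forces $J^{\ast}=p_{B}^{\ast}+\Delta p_{B}$, we get $O_{{}}^{\ast}=0$, hence $S\left( t+1\right)=S\left( t\right)+\delta \left( t\right) w\left( t\right)\geq 0$. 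In both cases the lower bound is immediate and constraint $\left( \ref{4}\right)$ is respected.

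For the \emph{upper bound} $S\left( t+1\right)\leq S_{{}}^{\max}$, charging is the only mechanism that can increase $S$, and it is controlled by $\delta \left( t\right)$ in $\left( \ref{61}\right)$: charging with $\delta^{\ast}=1$ occurs only when $S\left( t\right)<\theta=\varphi V+O_{{}}^{\max}$. The worst case for the upper bound is therefore $S\left( t\right)$ just below $\theta$ with no discharge and maximal harvest $w\left( t\right)\leq w^{\max}$, giving $S\left( t+1\right)\leq \theta+w^{\max}=\varphi V+O_{{}}^{\max}+w^{\max}$. This is where the hypothesis $\left( \ref{26}\right)$ enters: $\varphi V\leq S_{{}}^{\max}-w^{\max}-O_{{}}^{\max}$ is exactly the inequality that makes $\theta+w^{\max}\leq S_{{}}^{\max}$. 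When instead $S\left( t\right)\geq \theta$ we have $\delta^{\ast}=0$, so no energy is charged and $S\left( t+1\right)\leq S\left( t\right)\leq S_{{}}^{\max}$ trivially.

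The main obstacle — and the step deserving the most care — is making the case analysis exhaustive and checking that the threshold $\theta-\varphi V=O_{{}}^{\max}$ used for the $J/O$ split meshes correctly with the threshold $\theta=\varphi V+O_{{}}^{\max}$ used for the $\delta$ split, so that no combination of $\left(S\left( t\right), \text{discharge}, \text{charge}\right)$ is overlooked. In particular I must confirm that whenever $\delta^{\ast}=1$ is possible (i.e. $S\left( t\right)<\theta$) the bound still holds even if simultaneously some discharge occurs, and conversely that the region $S\left( t\right)\in[O_{{}}^{\max},\theta)$, where Case 1 discharging and $\delta^{\ast}=1$ charging coexist, cannot violate either bound; there the discharge $O\left( t\right)\geq 0$ only helps the upper bound while $S\left( t\right)\geq O_{{}}^{\max}$ protects the lower bound, so both follow. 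Assembling these observations completes the induction and establishes $\left( \ref{27}\right)$.
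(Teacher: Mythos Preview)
Your proposal is correct and follows essentially the same inductive argument as the paper's proof in Appendix B: the paper splits into the three ranges $[0,O^{\max})$, $[O^{\max},S^{\max}-w^{\max})$, $[S^{\max}-w^{\max},S^{\max}]$ and checks both bounds in each, whereas you treat the lower and upper bounds separately with their natural thresholds $O^{\max}$ and $\theta$, but the key steps---using $J^{\max}\geq p_{B}^{\max}+\Delta p_{B}$ to get $O^{\ast}=0$ when $S(t)<O^{\max}$, and using $\left(\ref{26}\right)$ to get $\theta+w^{\max}\leq S^{\max}$ when $\delta^{\ast}=1$---are identical.
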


\begin{proof}
The proof can be found in Appendix B.
\end{proof}

The above results demonstrate that although \emph{FREE} solves $\left(
P2\right) $ without considering the constraints\ $\left( \ref{3}\right) $-$%
\left( \ref{5}\right) ,$ the energy level in storage device is feasible in
each time slot.\ The following proposition characterizes the boundedness of
data queue. By recalling $\left( \ref{61}\right) ,$ it is noted that $\theta
$ can be regarded as a charging threshold for the battery and be implemented
by the algorithm. Based on $\left( \ref{26}\right) \,$and $\left( \ref{27}%
\right) ,$ the maximum $V$ is mostly determined by $S_{{}}^{\max },$ which
should be at least larger than the summation of the maximum charging rate $%
w^{\max }$ and the maximum withdrawing rate $O_{{}}^{\max }.$ We can
understand the lower bound of $S_{{}}^{\max }$ by the following observation.
To maintain the feasible value of residual energy in the battery, $%
S_{{}}^{\max }\geq w^{\max }+O_{{}}^{\max }$ is obtained by adding $\left( %
\ref{4}\right) $ and $\left( \ref{5}\right) $ together,\textit{\ i.e.}$%
,w\left( t\right) +O\left( t\right) \leq S^{\max }-S\left( t\right) +S\left(
t\right) =S^{\max },$ $\forall t.$

\begin{proposition}
\label{Prop:boundsQ} Initialize $Q_{n}\left( 0\right) =0,$ $\forall n$. the
data buffer backlogs yield the deterministic bounds $Q_{n}\left( t\right)
\leq Q^{\max }$\ for $\forall t\geq 0,$ $\forall n.$
\end{proposition}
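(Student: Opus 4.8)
The plan is to prove the bound by induction on $t$, relying only on the explicit flow-control rule $\left(\ref{32}\right)$ and the queue recursion $\left(\ref{33}\right)$; no probabilistic or Lyapunov machinery is needed since the claim is a \emph{deterministic} sample-path bound. The base case is immediate: since $Q_{n}\left(0\right)=0$ and $Q^{\max}\geq A^{\max}\geq 0$, we have $Q_{n}\left(0\right)\leq Q^{\max}$ for every $n$. For the inductive step I would assume $Q_{n}\left(t\right)\leq Q^{\max}$ and show $Q_{n}\left(t+1\right)\leq Q^{\max}$, splitting into the two regimes that the threshold rule $\left(\ref{32}\right)$ distinguishes.

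First I would record a monotonicity fact used in both cases: because $\mu_{n}\left(t\right)\geq 0$ and $Q_{n}\left(t\right)\geq 0$, the departure term obeys $\left[Q_{n}\left(t\right)-\mu_{n}\left(t\right)\right]^{+}\leq Q_{n}\left(t\right)$. Nonnegativity of $Q_{n}\left(t\right)$ itself is automatic from $\left(\ref{33}\right)$, since its right-hand side is a $\left[\cdot\right]^{+}$ term plus the nonnegative admission $R_{n}\left(t\right)$.

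In the first regime, $Q_{n}\left(t\right)>Q^{\max}-A^{\max}$, the rule $\left(\ref{32}\right)$ sets $R_{n}\left(t\right)=0$, so $\left(\ref{33}\right)$ gives $Q_{n}\left(t+1\right)=\left[Q_{n}\left(t\right)-\mu_{n}\left(t\right)\right]^{+}\leq Q_{n}\left(t\right)\leq Q^{\max}$ by the monotonicity fact and the induction hypothesis. In the second regime, $Q_{n}\left(t\right)\leq Q^{\max}-A^{\max}$, the rule admits $R_{n}\left(t\right)=A_{n}\left(t\right)\leq A_{n}^{\max}\leq A^{\max}$, whence $Q_{n}\left(t+1\right)\leq Q_{n}\left(t\right)+A^{\max}\leq\left(Q^{\max}-A^{\max}\right)+A^{\max}=Q^{\max}$. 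Combining the two regimes closes the induction, and the uniform bound $Q_{n}\left(t\right)\leq Q^{\max}$ holds for all $t\geq 0$ and all $n$.

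The argument is a routine deterministic queue-overflow bound, so I do not anticipate a genuine obstacle; the only point requiring care is that the admission threshold in $\left(\ref{32}\right)$ is calibrated exactly at $Q^{\max}-A^{\max}$, which guarantees that whenever new packets may enter there is room for a full worst-case batch of size $A^{\max}$, and that the standing assumption $Q^{\max}\geq A^{\max}$ is what keeps the second regime nonvacuous. This is precisely the design reason the constant $A^{\max}$ appears inside the flow-control objective $\left(\ref{59}\right)$ and hence in the solution $\left(\ref{32}\right)$.
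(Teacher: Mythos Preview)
Your proof is correct and follows essentially the same approach as the paper: induction on $t$ with the two-case split at the threshold $Q^{\max}-A^{\max}$, using $R_{n}(t)=0$ in the high-backlog regime and $R_{n}(t)\leq A^{\max}$ in the low-backlog regime. The paper's Appendix~C argument is simply a terser version of what you wrote.
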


\begin{proof}
The proof can be found in Appendix C.
\end{proof}

In the above results, given$\ Q^{\max }$ the designed \emph{FREE} can
guarantee that the actual queue length is limited by $Q^{\max }$ in each
time slot and the virtual queue also has a deterministic upper bound $%
U^{\max }.$ Theorem \ref{Theo:Nearoptimal} further proves \emph{FREE}
designed for the relaxed problem $\left( P2\right) $ can approach the
optimal solution of original problem$\ \left( P1\right) $ asymptotically.

\begin{theorem}
\label{Theo:Nearoptimal} Given $Q^{\max }>A^{\max 2}+\frac{A^{\max 2}+\mu
^{\max 2}}{2\epsilon },\ $the proposed algorithm \emph{FREE} can achieve the
near optimal performance:
\begin{eqnarray}
&&\underset{\!t\rightarrow \!\infty }{\!\lim \!\inf }\phi \!\!\sum_{n\in
\mathcal{N}}\!\!f\!\!\left( \!\frac{1}{t}\!\sum_{\tau =0}^{t-1}\!\mathbb{E}%
\!\left\{ \!R_{n}\!\left( \!\tau \!\right) \!\right\} \!\right) \!-\!\varphi
\!\frac{1}{t}\!\sum_{\tau =0}^{t-1}\!\mathbb{E}\!\left\{ \!P\!\left( \!\tau
\!\right) \!\right\}  \label{35} \\
&\geq &\mathcal{V}_{1}-\frac{\Xi }{V},  \notag
\end{eqnarray}%
and a bounded virtual queue%
\begin{eqnarray}
&&\underset{t\rightarrow \infty }{\lim \sup }\frac{1}{t}\sum_{\tau
=0}^{t-1}\sum_{n=1}^{N}\mathbb{E}\left\{ U_{n}\left( \tau \right) \right\}
\label{65} \\
&\leq &\frac{\left( \Xi +V\mathcal{V}^{\max }\right) Q^{\max }}{\epsilon
\left( Q^{\max }-A^{\max }\right) -2^{-1}\left( \mu ^{\max 2}+A^{\max
2}\right) }  \notag
\end{eqnarray}%
where $\epsilon $ is a positive variable, $\mathcal{V}_{1}$ is the optimal
objective value of $\left( P1\right) $ and $\Xi $ is defined as
\begin{eqnarray}
\Xi &\hat{=}&\frac{1}{2}NA^{\max }Q^{\max }+N\frac{Q^{\max }-A^{\max }}{%
Q^{\max }}A^{\max 2}  \label{41} \\
&&+\frac{1}{2}\left( w^{\max 2}+O^{\max 2}\right) .  \notag
\end{eqnarray}
\end{theorem}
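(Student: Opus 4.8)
The plan is to prove Theorem~\ref{Theo:Nearoptimal} by the drift-plus-penalty method of Lyapunov optimization. Writing $\Theta(t)\hat{=}\{Q_n(t),U_n(t),S(t)\}$ for the current state, I would first fix a Lyapunov function $L(t)$ of $\Theta(t)$ --- quadratic in the shifted energy level $S(t)-\theta$ and suitably weighted in the data and virtual queues so that its one-slot conditional drift $\Delta(t)\hat{=}\mathbb{E}\{L(t+1)-L(t)\mid\Theta(t)\}$ carries exactly the coefficients $\frac{U_n(t)Q_n(t)}{Q^{\max}}$, $\left(Q_n(t)-Q^{\max}+A^{\max}\right)$, $\frac{Q^{\max}-A^{\max}}{Q^{\max}}U_n(t)$ and $S(t)-\theta$ that drive the per-slot subproblems of Section~V. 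Squaring the updates $\left(\ref{33}\right)$, $\left(\ref{1}\right)$, $\left(\ref{3}\right)$ and applying $([q-\mu]^{+}+a)^2\le q^2+\mu^2+a^2-2q(\mu-a)$ to each, I would bound $\Delta(t)$ from above; the pure second-moment remainders collect into the constant $\Xi$ of $\left(\ref{41}\right)$, namely the piece $\tfrac12 NA^{\max}Q^{\max}$ together with $N\frac{Q^{\max}-A^{\max}}{Q^{\max}}A^{\max 2}$ and $\tfrac12(w^{\max 2}+O^{\max 2})$.

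Next I would assemble the drift-minus-penalty quantity $\Delta(t)-V\,\mathbb{E}\{\phi\sum_{n}f(X_n(t))-\varphi P(t)\mid\Theta(t)\}$ and regroup its upper bound. The key structural fact is that this bound splits additively into four independent pieces that coincide, term by term, with the objectives minimized by \emph{FREE}: the flow-control rule $\left(\ref{59}\right)$, the auxiliary-variable update $\left(\ref{60}\right)$, the radio-resource program $\left(\ref{23}\right)$ and the energy-management program $\left(\ref{24}\right)$. Because \emph{FREE} solves each piece to optimality in every slot, the value of the bound it attains is no larger than the value attained by \emph{any} alternative feasible policy evaluated against the same $\Theta(t)$. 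Establishing this exact correspondence is the step I expect to be the main obstacle: the product weight $\frac{U_nQ_n}{Q^{\max}}$ multiplying $\mu_n$ in $\left(\ref{23}\right)$, the shifted weight in $\left(\ref{59}\right)$ that later yields the deterministic bound of Proposition~\ref{Prop:boundsQ}, and the shared coefficient $S(t)-\theta$ coupling $\left(\ref{23}\right)$ to $\left(\ref{24}\right)$ must all be produced precisely by the drift algebra and the chosen weighting.

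Having fixed this comparison inequality, I would invoke the existence of a stationary randomized policy that depends only on the current random events $\left(\mathcal{I}(t),A_n(t),w(t)\right)$, satisfies every constraint of $\left(P2\right)$, and attains a reward within $\epsilon$ of the optimum $\mathcal{V}_2$; such a policy exists under the i.i.d.\ assumption by the standard arguments of the Lyapunov-optimization framework in \cite{Neely_book_2010}, and because it ignores $\Theta(t)$ its expected linear drift terms vanish. Substituting it into the right-hand side of the comparison inequality gives $\Delta(t)-V\,\mathbb{E}\{\phi\sum_n f(X_n(t))-\varphi P(t)\}\le\Xi-V\mathcal{V}_2\le\Xi-V\mathcal{V}_1$, the last inequality using $\mathcal{V}_2\ge\mathcal{V}_1$ from the relaxation of Section~IV.

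Finally I would sum this inequality over $\tau=0,\dots,t-1$, drop the nonnegative boundary term $\mathbb{E}\{L(t)\}\ge0$ (the energy and data queues staying bounded by Theorem~\ref{Theo:EnergyLevel} and Proposition~\ref{Prop:boundsQ}), divide by $Vt$, and let $t\to\infty$. Jensen's inequality applied to the concave $f$ pulls the time-average inside $f$, and constraint C1$'$ relating $\bar X_n$ to $\bar R_n$ replaces the auxiliary reward by the admitted-rate reward, yielding $\left(\ref{35}\right)$. For the virtual-queue bound $\left(\ref{65}\right)$ I would instead retain the queue terms, bound the penalty crudely by $V\mathcal{V}^{\max}$, and use the hypothesis $Q^{\max}>A^{\max 2}+\frac{A^{\max 2}+\mu^{\max 2}}{2\epsilon}$ to keep $\epsilon(Q^{\max}-A^{\max})-\tfrac12(\mu^{\max 2}+A^{\max 2})$ strictly positive; rearranging the telescoped drift to isolate $\tfrac1t\sum_{\tau}\sum_{n}\mathbb{E}\{U_n(\tau)\}$ then gives the stated upper bound.
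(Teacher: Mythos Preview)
Your overall architecture---Lyapunov drift-plus-penalty, decomposition into the four per-slot subproblems, comparison against a stationary randomized policy, telescoping and Jensen---matches the paper. But there is a real gap in the drift algebra that breaks the performance bound as you have sketched it.

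The weight $\frac{U_n(t)Q_n(t)}{Q^{\max}}$ on $\mu_n(t)$ in $\left(\ref{23}\right)$ cannot be produced by a purely quadratic Lyapunov function; the paper uses
\[
\mathcal{L}(\Theta(t))=\frac{1}{2}\sum_{n}\Bigl[\tfrac{Q^{\max}-A^{\max}}{Q^{\max}}U_n^2(t)+\tfrac{U_n(t)Q_n^2(t)}{Q^{\max}}\Bigr]+\tfrac{1}{2}(S(t)-\theta)^2,
\]
whose middle term is cubic in the queues. Squaring the $Q_n$-update inside this term leaves, besides the constants that do collect into $\Xi$, an additional \emph{queue-dependent} remainder $\tfrac{1}{2}\sum_n\frac{U_n(t)}{Q^{\max}}\bigl(\mu^{\max 2}+A^{\max 2}\bigr)$ in the drift bound (cf.\ $\left(\ref{43}\right)$). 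Your sentence ``the pure second-moment remainders collect into the constant $\Xi$'' is therefore incorrect, and with it the step ``its expected linear drift terms vanish, so $\Delta(t)-V\mathbb{E}\{\cdot\}\le\Xi-V\mathcal{V}_2$'': a positive $U_n$-dependent term survives on the right and cannot simply be dropped.

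The paper handles this by substituting \emph{two} stationary policies into the separable pieces: the $\epsilon$-slacked policy $RS'$ (with $\mathbb{E}\{X_n^{RS'}\}=\kappa_n-\epsilon$) into $\Phi_{1n}$, and the exact policy $RS$ into $\Phi_{2n},\Phi_{3n},\Phi_4$. The slack contributes $-\epsilon\frac{Q^{\max}-A^{\max}}{Q^{\max}}\sum_nU_n(t)$, and the hypothesis on $Q^{\max}$ is exactly what makes this dominate the positive $U_n$-remainder, so the net $U_n$-coefficient is negative. For $\left(\ref{35}\right)$ one then drops the (now negative) $U_n$-term at the price of getting $\mathcal{V}_{2\epsilon}$ rather than $\mathcal{V}_2$, and only afterwards lets $\epsilon\to 0$; for $\left(\ref{65}\right)$ one keeps it to extract the virtual-queue bound. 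You invoke the $\epsilon$-policy only for $\left(\ref{65}\right)$, but it is equally indispensable for $\left(\ref{35}\right)$.
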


Before we give the formal proof, the following lemma is needed.

\begin{lemma}
If the packet arrival process $\left\{ A_{n}\left( t\right) ,\forall n\in
\mathcal{N}\right\} ,$ channel state $\mathcal{I}\left( t\right) ,$ and
renewable energy generation$\ w\left( t\right) ,$ are i.i.d., then there
exists a randomized stationary policy $\Psi ^{RS}$ that takes feasible flow
control, radio resource allocation and energy management only based on the\
current system state $\left\{ A_{n}\left( t\right) ,\mathcal{I}\left(
t\right) ,w\left( t\right) \right\} $ at every time slot $t$ while
satisfying the constraints in problem $\left( P2\right) $ and results in the
following conditions:%
%\vspace{-0.3cm}
\begin{equation*}
\begin{array}{l}
\mathbb{E}\left\{ O^{RS}\left( t\right) \right\} \!=\!\mathbb{E}\left\{
\delta ^{RS}\left( t\right) w\left( t\right) \right\} , \\
\mathbb{E}\left\{ X_{n}^{RS}\left( t\right) \right\} \!=\!\mathbb{E}\left\{
\mu _{n}^{RS}\left( t\right) \right\} =\mathbb{E}\left\{ R_{n}^{RS}\left(
t\right) \right\} \!=\!\kappa _{n},\text{ }\forall \!n\!\in \!\mathcal{N},
\\
\mathbb{E}\left\{ \phi \sum_{n=1}^{N}f\left( R_{n}^{RS}\left( t\right)
\right) \right\} \!-\!\mathbb{E}\left\{ \varphi P^{RS}\left( t\right)
\right\} =\mathcal{V}_{2},%
\end{array}%
\end{equation*}%
where $\kappa _{n}$ belongs to the network region $\Lambda ,$ which is the
set of network capacities under all possible control policies. There also
exists another randomized stationary policy $RS^{\prime }$ that only depends
on the system state and the following conditions hold, $\forall n\in
\mathcal{N}$.
\begin{equation*}
\begin{array}[t]{l}
\mathbb{E}\left\{ O^{RS^{\prime }}\left( t\right) \right\} \!=\!\mathbb{E}%
\left\{ \delta ^{RS^{\prime }}\left( t\right) w\left( t\right) \right\} , \\
\mathbb{E}\left\{ X_{n}^{RS^{\prime }}\left( t\right) \right\} \!=\!\mathbb{E%
}\left\{ \mu _{n}^{RS^{\prime }}\left( t\right) \right\} \!=\!\mathbb{E}%
\left\{ R_{n}^{RS^{\prime }}\left( t\right) \right\} \!=\!\kappa
_{n}\!-\!\epsilon , \\
\mathbb{E}\left\{ \phi \sum_{n=1}^{N}f\left( R_{n}^{RS^{\prime }}\left(
t\right) \right) \right\} \!-\!\mathbb{E}\left\{ \varphi P^{RS^{\prime
}}\left( t\right) \right\} \!=\!\mathcal{V}_{2\epsilon },%
\end{array}%
\end{equation*}%
where $\epsilon \in \Lambda $ is a positive value that can be chosen
arbitrarily close to zero. According to \cite{Neely_book2006}, we have $%
\mathcal{V}_{2\epsilon }\rightarrow \mathcal{V}_{2}$ as $\epsilon
\rightarrow 0.$
\end{lemma}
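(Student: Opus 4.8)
The plan is to invoke the standard existence theory for stationary randomized (``$\omega$-only'') policies from Lyapunov optimization, adapting it to the coupled flow-control, radio-resource and energy variables here. Write $\omega(t) = \{A_n(t),\mathcal{I}(t),w(t)\}$ for the i.i.d.\ state observed at slot $t$. For each fixed realization of $\omega$, the per-slot feasible decisions (admitted rates $R_n$, auxiliary rates $X_n$, assignments $\alpha$, powers $p$, charging fraction $\delta$, withdrawal $O$, grid draw $J$) form a compact set cut out by (2), (7), (45), (5), (6) together with $0\le\delta\le 1$ and $0\le X_n\le A^{\max}$; mapping a decision to the instantaneous outcome vector $(R_n,X_n,\mu_n,O,\delta w,P,f(R_n))$ yields an attainable set $\mathcal{G}(\omega)$. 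First I would show that the time-average outcome of \emph{any} feasible policy for $(P2)$ lies in $\mathbb{E}_\omega\{\mathrm{conv}(\mathcal{G}(\omega))\}$: since the state is i.i.d.\ and the action set depends only on the current $\omega$, any history-dependent policy can be matched in expectation by one randomizing over the current state, and Carath\'eodory's theorem realizes each selection $g(\omega)\in\mathrm{conv}(\mathcal{G}(\omega))$ by randomizing over finitely many pure actions, defining a legitimate stationary randomized policy.

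Next I would identify the optimal such policy. Since $\mathcal{V}_2$ is the supremum of $\phi\sum_n\overline{f(X_n)}-\varphi\bar P$ over feasible policies, the outcome region is bounded (by the power caps and $A^{\max}$) with compact closure, and $f$ is continuous, the supremum is attained by some stationary randomized policy $\Psi^{RS}$. The energy balance $\mathbb{E}\{O^{RS}\}=\mathbb{E}\{\delta^{RS}w\}$ is exactly C4, which holds for any feasible stationary policy by the telescoping argument already used to obtain $\bar O=\bar\delta\bar w$ from (8). The chain $\mathbb{E}\{X_n^{RS}\}=\mathbb{E}\{\mu_n^{RS}\}=\mathbb{E}\{R_n^{RS}\}=\kappa_n$ holds because at optimality no resource is wasted: C1' and the queue updates (33), (1) force admitted, served, and auxiliary rates to share a common long-run average, since otherwise one could raise $X_n$ or $R_n$ and strictly improve the concave objective; this places $\kappa_n$ in the capacity region $\Lambda$.

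For the perturbed policy $RS'$ I would use a slackness/continuity argument. Choose $\epsilon>0$ small enough that the reduced demand vector $(\kappa_n-\epsilon)_n$ still lies in $\Lambda$. Starting from $\Psi^{RS}$, construct an $\omega$-only policy driving every rate to $\kappa_n-\epsilon$ while preserving the energy-balance equality; this is feasible because lowering the target throughput only relaxes the demands on the shared radio and energy resources, so the interior point remains achievable. Its value $\mathcal{V}_{2\epsilon}$ is then well-defined, and since $f$ is continuous and $P$ depends continuously on the control, $\mathcal{V}_{2\epsilon}\to\mathcal{V}_2$ as $\epsilon\to 0$, which is the limiting statement attributed to \cite{Neely_book2006}.

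The main obstacle I expect is the reduction in the first paragraph: rigorously justifying that arbitrary, possibly history-dependent, policies gain nothing over stationary $\omega$-only ones, and that the outcome region is closed so that the supremum defining $\mathcal{V}_2$ is genuinely \emph{attained} rather than merely approached. This needs the Carath\'eodory representation together with a compactness argument over the finite-but-arbitrarily-large channel-state space. Once attainment is secured, the energy-balance and rate-equality claims and the $\epsilon$-perturbation are comparatively routine.
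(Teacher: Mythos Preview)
Your approach is correct and is precisely the standard argument from the Lyapunov-optimization framework that the paper is invoking. Note, however, that the paper does \emph{not} supply its own proof of this lemma: it is stated without proof as an auxiliary result, with the final convergence $\mathcal{V}_{2\epsilon}\to\mathcal{V}_2$ explicitly attributed to \cite{Neely_book2006}. The lemma is essentially a restatement, adapted to the variables of $(P2)$, of the generic ``$\omega$-only policy'' existence theorems in Neely's monographs (e.g., Theorem~4.5 and the arguments of Chapter~5 in \cite{Neely_book_2010}). Your Carath\'eodory-plus-compactness reduction of history-dependent policies to stationary randomized ones, the optimality argument forcing $\mathbb{E}\{X_n\}=\mathbb{E}\{\mu_n\}=\mathbb{E}\{R_n\}$, and the $\epsilon$-interior-point construction for $RS'$ are exactly the ingredients of that standard theory, so there is nothing to compare: you have written out what the paper leaves implicit by citation.
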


The proof process of Theorem \ref{Theo:Nearoptimal} is described in the following.
\begin{proof}
The average performance bound $\left( \ref{35}\right) $ is obtained by
comparing the Lyapunov drift of \emph{FREE} with the aforementioned
randomized stationary policy. We use Lyapunov optimization technique to
derive the performance bound. Let $\Theta \left( t\right) =\left( S\left(
t\right) ,\mathbf{U}\left( t\right) ,\mathbf{Q}\left( t\right) \right) ,$
where $\mathbf{U}\left( t\right) =\left( U_{n}^{{}}\left( t\right) ,n\in
\mathcal{N}\right) ,$ $\mathbf{Q}\left( t\right) =\left( Q_{n}^{{}}\left(
t\right) ,n\in \mathcal{N}\right) .$\ Define the Lyapunov function as
\begin{eqnarray}
&&\mathcal{L}\!\left( \!\Theta \!\left( t\right) \!\right)  \label{57} \\
&\!\!=\!\!&\!\frac{1}{2}\!\!\sum_{n\in \mathcal{N}}\!\!\left[ \!\frac{%
Q^{\max }\!-\!A^{\max }}{Q^{\max }}U_{n}^{2}\!\left( t\right) \!+\!\frac{%
U_{n}^{{}}\!\left( t\right) \!Q_{n}^{2}\!\left( t\right) \!}{Q^{\max }}%
\right] \!\!+\!\!\frac{1}{2}\!\left( \!S\!\left( t\right) \!\!-\!\!\theta
\right) ^{2}  \notag
\end{eqnarray}

\begin{figure*}[tbp]
\centering
\begin{minipage}{.32\textwidth}
\centering
\includegraphics[width=6cm]{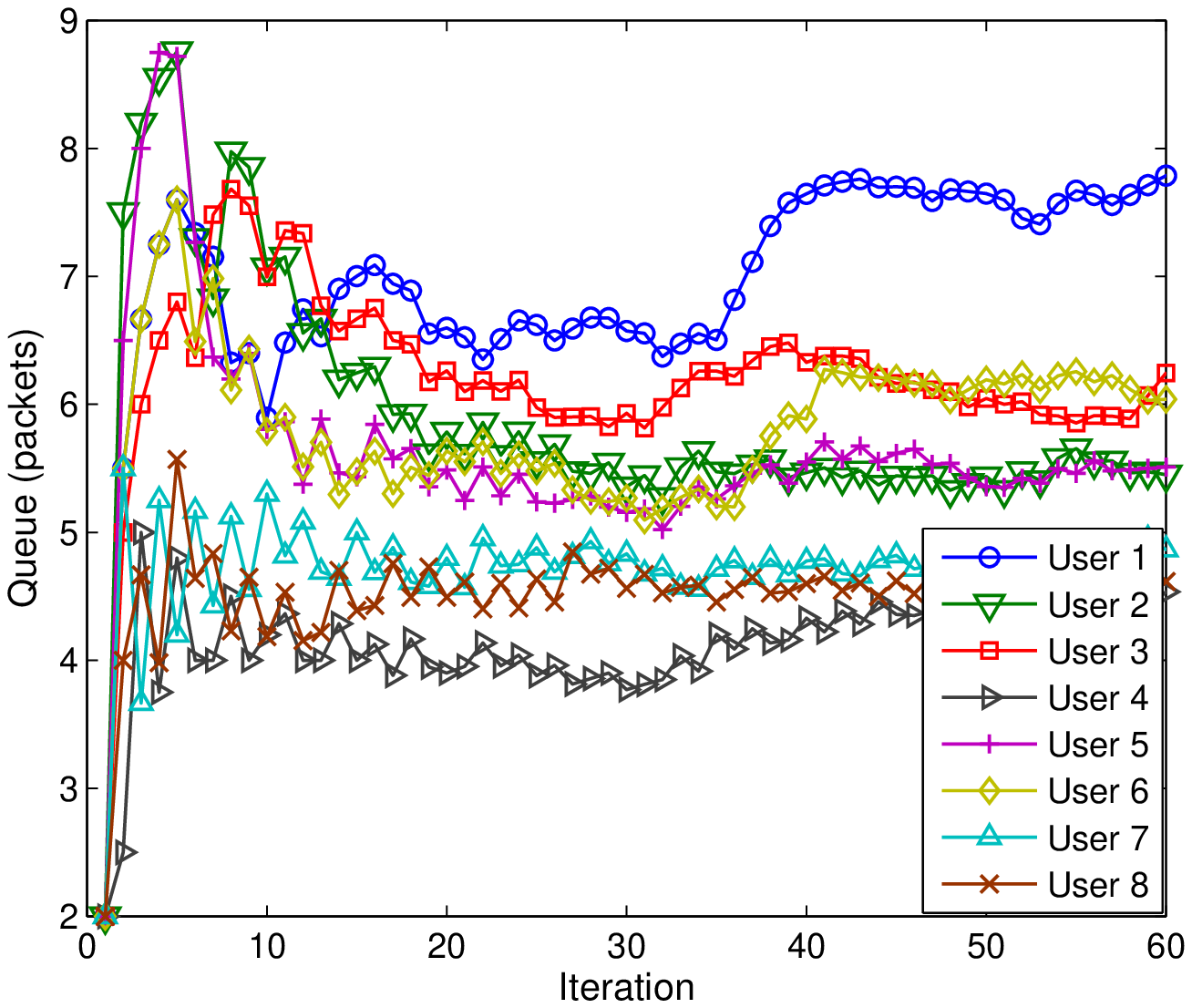}
\caption{Dynamics of actual\ queue $\left\{ Q_{n}\left( t\right) \right\} $.}
\label{Fig. 3}
\end{minipage}%
\begin{minipage}{.32\textwidth}
\centering
\includegraphics[width=5.8cm]{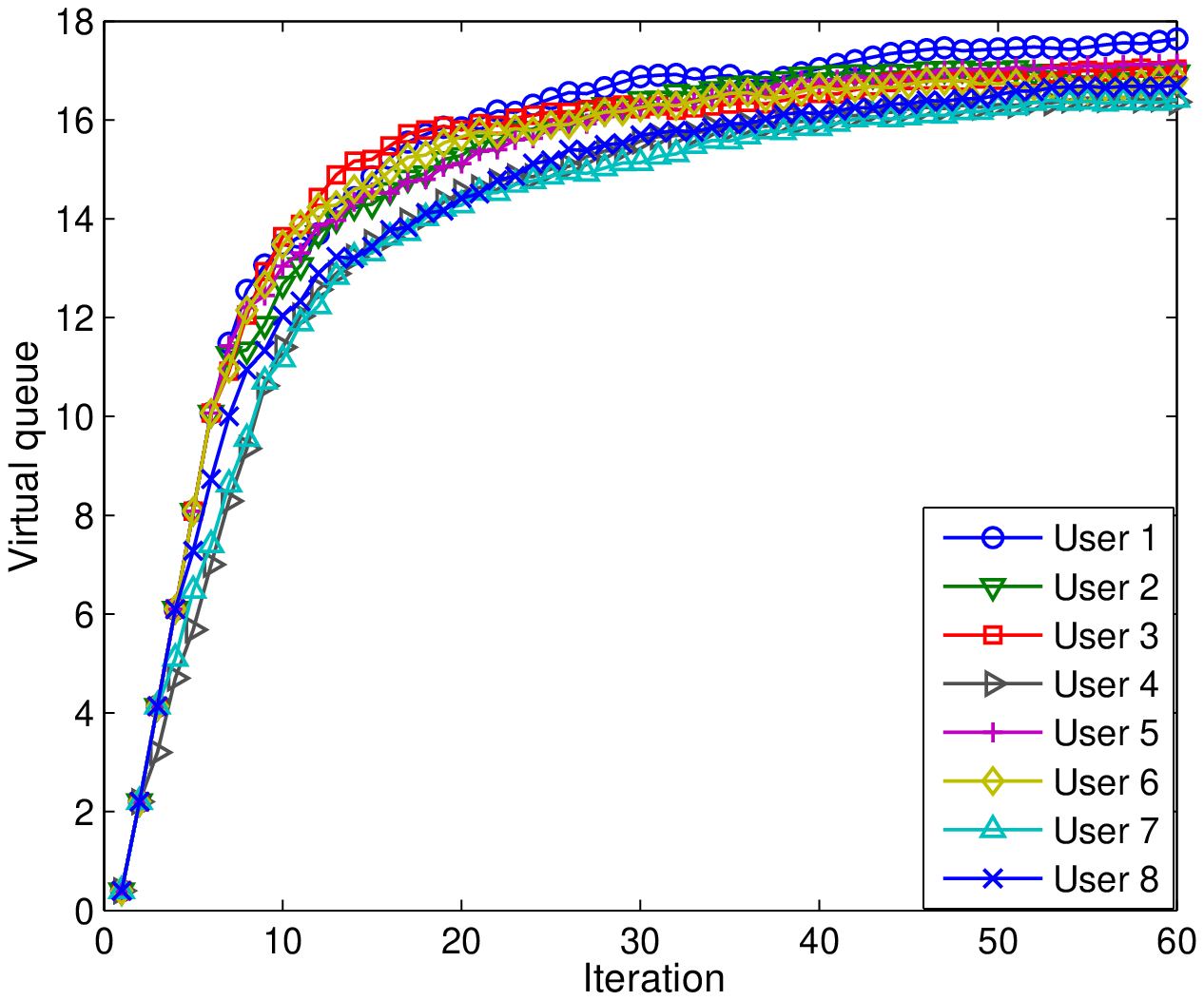}
\caption{Dynamics of virtual\ queue $\left\{ U_{n}\left( t\right) \right\} $.}
\label{Fig. 4}
\end{minipage}%
\begin{minipage}{.32\textwidth}
\centering
\includegraphics[width=5.8cm]{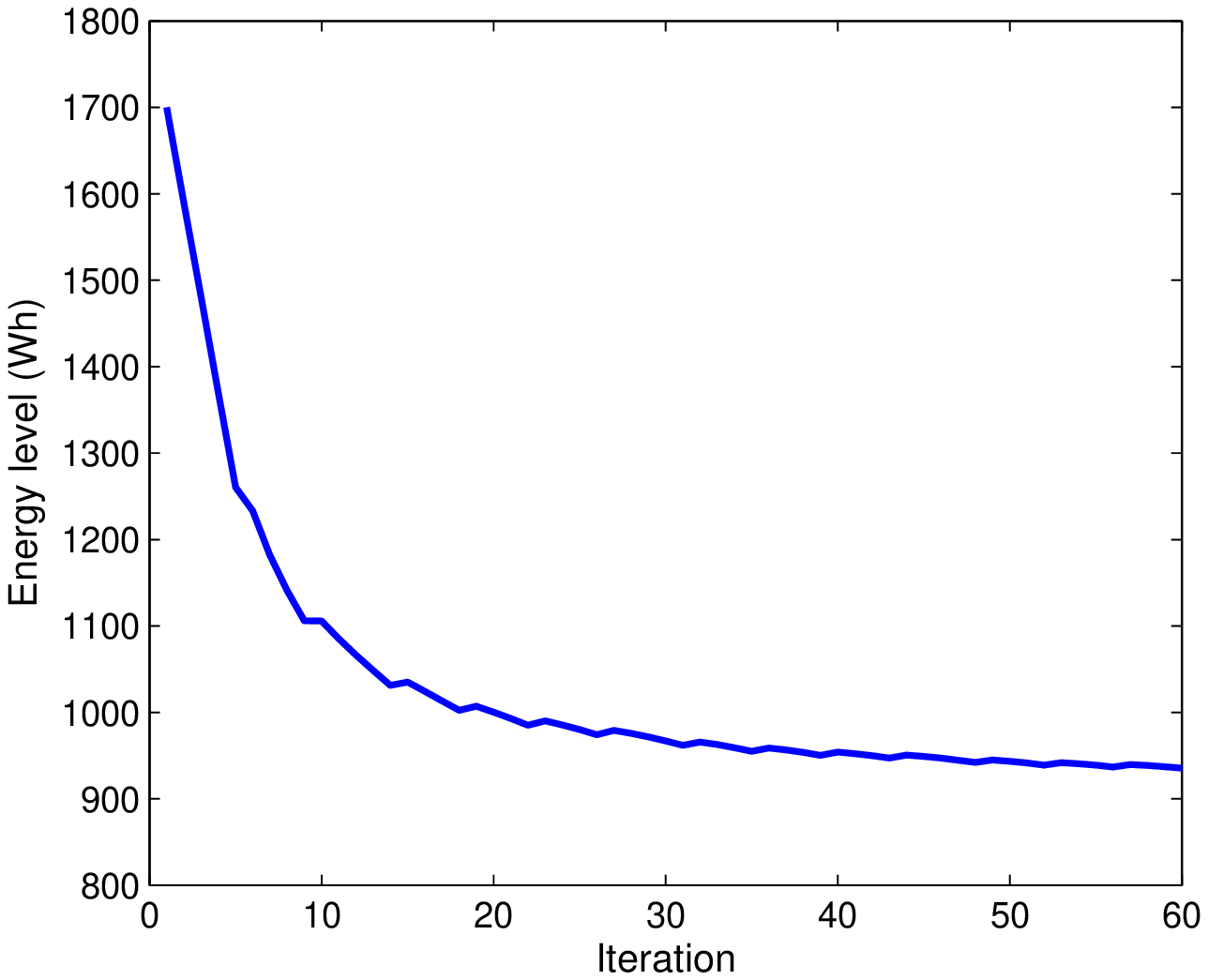}
\caption{Energy level $S\left( t\right) $\ in the storage device.}
\label{Fig. 5}
\end{minipage}
\end{figure*}

Define the conditional Lyapunov drift as follows,
\begin{equation}
\Delta \mathcal{L}\left( \Theta \left( t\right) \right) \hat{=}\mathbb{E}%
\left\{ \mathcal{L}\left( \Theta \left( t+1\right) \right) -\mathcal{L}%
\left( \Theta \left( t\right) \right) \left\vert \Theta \left( t\right)
\right. \right\} ,  \label{40}
\end{equation}%
where the conditional expectation is taken over the system queue. The drift
of the first two items in $\left( \ref{57}\right) $ can be found in$\ $\cite%
{Xingzheng_TVT}. According to the battery dynamics in $\left( \ref{3}\right)
,$ the drift of the third item in $\left( \ref{57}\right) $ is%
\begin{eqnarray}
&&\frac{1}{2}\left( \left( S\left( t+1\right) -\theta \right) ^{2}-\left(
S\left( t\right) -\theta \right) ^{2}\right)  \label{39} \\
&=&\frac{1}{2}\left( S_{{}}^{2}\left( t+1\right) -S_{{}}^{2}\left( t\right)
-2\theta \left( S\left( t+1\right) -S\left( t\right) \right) \right)  \notag
\\
&=&\frac{1}{2}(\left( S\left( t\right) -O\left( t\right) +\delta \left(
t\right) w\left( t\right) \right) ^{2}-S_{{}}^{2}\left( t\right)  \notag \\
&&-2\theta _{{}}\left( S_{{}}\left( t+1\right) -S\left( t\right) \right) )
\notag \\
&\leq &\frac{1}{2}(w^{\max 2}+O^{\max 2}  \notag \\
&&-2\left( S\left( t\right) -\theta \right) \left( O\left( t\right) -\delta
\left( t\right) w\left( t\right) \right) )  \notag \\
&=&\frac{1}{2}\left( w^{\max 2}+O^{\max 2}\right)  \notag \\
&&-\left( S\left( t\right) -\theta \right) \left( p_{B}\left( t\right)
+\Delta p_{B}\left( t\right) -J\left( t\right) -\delta _{{}}\left( t\right)
w_{{}}\left( t\right) \right)  \notag
\end{eqnarray}%
By subtracting $V\mathbb{E}\left\{ \left( \sum_{n\in \mathcal{N}}^{{}}\phi
f\left( X_{n}\left( t\right) \right) -\varphi P\left( t\right) \right)
\left\vert \Theta \left( t\right) \right. \right\} $ from the conditional
drift of $\left( \ref{57}\right) $, we have%
\begin{eqnarray}
&&\Delta \!\mathcal{L}\!\left( \!\Theta \!\left( t\right) \!\right)
\!\!-\!\!V\mathbb{E}\!\!\left\{ \!\!\left( \!\sum_{n\in \mathcal{N}%
}^{{}}\!\phi f\!\left( \!X_{n}\!\left( t\right) \!\right) \!-\!\varphi
\!P\left( t\right) \!\right) \left\vert \!\Theta \!\left( t\right) \!\right.
\!\right\}  \label{43} \\
&\!\leq \!&\Xi \!+\!\sum_{n\in \mathcal{N}}^{{}}\!\mathbb{E}\!\left\{ \!\Phi
_{1n}\!\left( t\right) \!+\!\Phi _{2n}\!\left( t\right) \!-\!\Phi
_{3n}\!\left( t\right) \!+\!\Phi _{4}\!\left( t\right) \!\left\vert \!\Theta
\!\left( t\right) \!\right. \!\right\}  \notag \\
&&\!-\!\left( S\!\left( t\right) \!-\!\theta \right) \!\Delta p_{B}\!\left(
t\right) \!+\!\mathbb{E}\!\left\{ \!\varphi V\!\sum_{i\in \mathcal{R}%
}\!\Delta p_{i}\!\left( t\right) \!\left\vert \!\Theta \!\left( t\right)
\!\right. \!\right\}  \notag \\
&&\!+\!\frac{1}{2}\!\sum_{n\in \mathcal{N}}^{{}}\!\frac{U_{n}\!\left(
t\right) \!}{Q^{\max }}\!\left( \mu ^{\max 2}\!+\!A^{\max 2}\right)  \notag
\end{eqnarray}%
where $\Xi $ is defined in $\left( \ref{41}\right) $ and $\mathcal{R}\left(
n\right) $ is the set of RSs that assist the transmissions towards user $n$,%
\begin{equation*}
\Phi _{1n}\left( t\right) \hat{=}\frac{Q^{\max }-A^{\max }}{Q^{\max }}%
U_{n}\left( t\right) X_{n}\left( t\right) -V\phi f\left( X_{n}\left(
t\right) \right)
\end{equation*}%
\begin{equation*}
\Phi _{2n}\left( t\right) \hat{=}R_{n}\left( t\right) \frac{U_{n}\left(
t\right) }{Q^{\max }}\left( Q_{n}\left( t\right) -\left( Q^{\max }-A^{\max
}\right) \right)
\end{equation*}%
\begin{equation*}
\Phi _{3n}\left( t\right) \!\hat{=}\!\frac{U_{n}\!\left( t\right)
\!Q_{n}\!\left( t\right) \!\mu _{n}\!\left( t\right) \!}{Q^{\max }}%
\!+\!\left( \!S\!\left( t\right) \!-\!\theta \right) \!p_{B}\!\left(
t\right) \!-\!V\!\varphi \!\!\sum_{i\in \mathcal{R}\!\left( n\right)
\!}\!p_{i}\!\left( t\right)
\end{equation*}%
\begin{equation*}
\Phi _{4}\left( t\right) \hat{=}\left( \varphi V+S\left( t\right) -\theta
\right) J\left( t\right) +\left( S\left( t\right) -\theta \right) \delta
\left( t\right) w\left( t\right)
\end{equation*}%
It can be found that the subproblems $\left( \ref{60}\right) ,$ $\left( \ref%
{59}\right) ,$ $\left( \ref{23}\right) $ and $\left( \ref{24}\right) $ in
\emph{FREE }minimizes $\sum_{n\in \mathcal{N}}^{{}}\mathbb{E}\left\{ \Phi
_{1n}\left( t\right) +\Phi _{2n}\left( t\right) -\Phi _{3n}\left( t\right)
+\Phi _{4}\left( t\right) \left\vert \Theta \left( t\right) \right. \right\}
$ on the right hand side of $\left( \ref{43}\right) $ over all possible
solutions including the randomized stationary policy. Substituting$\ $the
randomized stationary policy $RS^{\prime }$ into $\Phi _{1n}\left( t\right) $
and substituting policy $RS$ into $\Phi _{2n}\left( t\right) ,\Phi
_{3n}\left( t\right) $ and $\Phi _{4}\left( t\right) $ on the right hand
side of $\left( \ref{43}\right) $ it is obtained that,
\begin{eqnarray*}
&&\Delta \!\mathcal{L}\!\left( \!\Theta \!\left( t\right) \!\right) \!-\!V\!%
\mathbb{E}\!\left\{ \!\left( \!\!\sum_{n\in \mathcal{N}}^{{}}\!\phi
f\!\left( X_{n}\!\left( t\right) \right) \!-\!\varphi P\!\left( t\right)
\!\right) \left\vert \Theta \!\left( t\right) \!\right. \!\right\} \\
&\!\leq \!&\Xi \!-\!\frac{\epsilon }{2}\!\sum_{n\in \mathcal{N}}^{{}}\!\frac{%
U_{n}\!\left( t\right) \!}{Q^{\max }}(2\!\left( Q^{\max }\!-\!A^{\max
}\right) \\
&&-\epsilon ^{-1}\!\left( \mu ^{\max 2}\!+\!A^{\max 2}\right) )\!-\!V%
\mathcal{V}_{2\epsilon }
\end{eqnarray*}%
If $Q^{\max }\geq \frac{\mu ^{\max 2}+A^{\max 2}}{2\epsilon }+A^{\max },$ by
Theorem 5.4 in \cite{Neely_book2006}, it is obtained that $\left( \ref{65}%
\right) $ and%
\begin{equation}
\underset{t\rightarrow \infty }{\lim \inf }\frac{1}{t}\!\sum_{\tau
=0}^{t-1}\!\mathbb{E}\!\left\{ \!\sum_{n=1}^{N}\!\phi f\!\left(
X_{n}\!\left( \tau \right) \!\right) \!-\!\varphi P\!\left( \tau \right)
\!\right\} \!\geq \!\mathcal{V}_{2\epsilon }\!-\!\frac{\Xi }{V}  \label{63}
\end{equation}%
hold. Since the virtual\ queue $U_{n}\left( t\right) $ is stable by $\left( %
\ref{65}\right) $\emph{,} we have $\lim_{t\rightarrow \infty }\frac{1}{t}%
\sum\limits_{\tau =0}^{t-1}\mathbb{E}\{R_{n}\left( \tau \right) \}\geq
\lim_{t\rightarrow \infty }\frac{1}{t}\mathbb{E}\{X_{n}\left( \tau \right)
\}.$ Hence we have
\begin{equation}
f\left( \bar{R}_{n}\right) \geq f\left( \bar{X}_{n}\right) \geq \overline{%
f\left( X_{n}\right) },  \label{64}
\end{equation}%
where the first inequality is due to the\ non-decreasing property of $%
f\!\left( \cdot \right) \!$ and the second inequality is Jensen's
inequality. Note that $\left( \ref{63}\right) $ holds for all $\epsilon .$
Due to $\mathcal{V}_{2\epsilon }\!\!\rightarrow \!\!\mathcal{V}_{2}$ as $%
\epsilon \!\!\rightarrow \!\!0,$ and $\mathcal{V}_{2}\!>\!\mathcal{V}_{1}$
we have $\left( \ref{35}\right) $ in \emph{Theorem \ref{Theo:Nearoptimal} }%
by substituting $\left( \ref{64}\right) $ into $\left( \ref{63}\right) $.
\end{proof}

\begin{remark}
Although the designed per-slot sub-problems $\left( \ref{59}\right) ,$ $%
\left( \ref{60}\right) \,,$ $\left( \ref{23}\right) ,$ $\left( \ref{24}%
\right) $ without considering the future network performance, jointly solve $%
\left( P2\right) $ instead of $\left( P1\right) ,$ the resulted solution can
approach the optimal objective value of $\left( P1\right) $ arbitrarily by
regulating $V$. This result is reasonable since a larger storage (a larger $%
V $) can store more renewable energy to save on-grid energy to achieve a
larger objective value.
\end{remark}

\begin{remark}
The result in \emph{Theorem \ref{Theo:Nearoptimal}} is based on the
assumption of i.i.d stochastic process. It can be generalized to non i.i.d.
Markov modulated scenarios using the techniques developed in \cite%
{Neely_book2006} \cite{Neely_book_2010}.
\end{remark}

\section{Simulation Results}

This section presents the simulation results for the proposed algorithm
\emph{FREE.} A cell with 2 km radius is considered, where eight\ mobile
users are uniformly distributed in the cell and four RSs are assumed to be
uniformly placed in the midway of BS and cell boundary. The simulation
parameters are set as follows except for other specification. In every time
slot, the BS receives new packets with destinations of the corresponding
users according to an i.i.d Poisson arrival process with average rate of $8$
packets/s. The packet size has an\ exponential distribution with mean packet
size of $5000$ bits/packet.\ The buffer size for each user is $10$ packets.
The total bandwidth is $10$ MHz with $128$ subcarriers. The channel gain of
any transmission pair in networks consists of a small-scale Rayleigh fading
component and a large-scale path loss component with path loss factor of $4.$
The noise level is assumed to be $10^{-10}$ W and the gap to capacity is set
to be $1$. The static power consumption of the BS is $194.24$ W and its
maximum transmission power is $20$ W without particular specification. For
any RS, its static power consumption is $40$ W and the maximum transmission
power is $10$ W. The power mask on each subcarrier is $\hat{P}\left(
m\right) =0.2$ W. The capacity of the BS\ storage device is $3000$ Wh unless
otherwise specified. The maximum discharge rate\ of the storage device is $%
O^{\max }=1.5\times \left( p_{B}^{\max }+\Delta p_{B}\right) $ W. The
maximum power from the grid to supply the BS directly is $J^{\max }=O^{\max
}\,$. The average\ generated solar power has two states corresponding to
sunny day and cloudy day: $195$ Wh with probability $0.6$ and $100$ Wh$\,$%
with$\ $probability $0.4$, respectively \cite{duisit_niyato}. The parameter $%
\theta \ $is set as $\varphi V+O_{{}}^{\max }.$ The utility function is
chosen as $f\left( x\right) =\log \left( 1+x\right) .$

%\begin{figure}[tbph]
%% Requires \usepackage{graphicx}
%\centering
%\includegraphics[width=7.3cm]{actual-queue.eps}
%\caption{Dynamics of actual\ queue $\left\{ Q_{n}\left( t\right) \right\} $.}
%\label{Fig. 3}
%\end{figure}

%\begin{figure}[tbph]
%\center\includegraphics[width=9cm, height=5cm]{actual-queue.eps}
%%\centering
%%\includegraphics[width=7.3cm]{actual-queue.eps}
%\caption{Dynamics of actual\ queue $\left\{ Q_{n}\left( t\right) \right\} $.}
%\label{Fig. 3}
%\end{figure}

Firstly, we verify the stability of proposed algorithms. From Fig. \ref{Fig.
3} to Fig. \ref{Fig. 5}, it is observed that all the actual queues $\left\{
Q_{n}\left( t\right) \right\} $, virtual queues $\left\{ U_{n}\left(
t\right) \right\} $ and energy level $\left\{ S\left( t\right) \right\} $ in
the storage device have the trend of stability. Especially Fig. \ref{Fig. 3}
shows that all actual queues are smaller than the upper bound of $10$
packets and Fig. \ref{Fig. 4} demonstrates that all virtual queues are
bounded, which verifies the results in \emph{Proposition \ref{Prop:boundsQ}}
and $\left( \ref{65}\right) $\emph{.} Fig. \ref{Fig. 5} demonstrates the
results in \emph{Theorem \ref{Theo:EnergyLevel}} that the\ energy level in
the storage device of the BS is always nonnegative and below the capacity of
storage device.

\begin{figure}[tbph]
% Requires \usepackage{graphicx}
\center%
\includegraphics[width=9cm,
height=5cm]{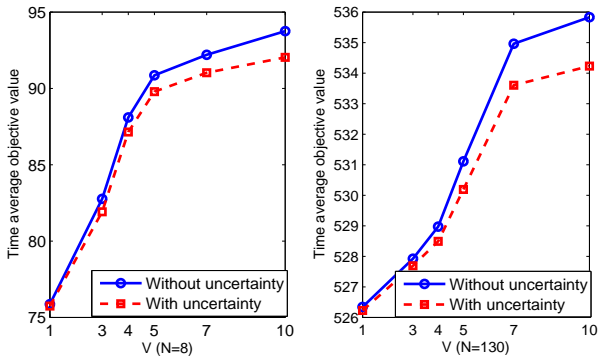} %\centering
\caption{Time average objective value with different $V.$ Other parameters
include $\protect\varphi =0.5$ and $\protect\phi =16.$}
\label{Fig. 6}
\end{figure}

The simulations are run for $1.2\times 10^{4}$ iterations\ and the results
in Fig. \ref{Fig. 6} to Fig. \ref{Fig. 10} are obtained by averaging the
last $5000$ iterations. It is shown that the time-average\ tradeoff value in
$\left( P1\right) $ grows with the increase of $V$, which verifies $\left( %
\ref{35}\right) .$ It is further noticed that with $N=130$ the time average
objective value increases slower than that with $N=8$, since a larger $N$ in
$\Xi $ $\left( \ref{41}\right) $ brings a larger gap between the optimal
objective value and actual value. To decrease the gap, a larger $V$ is
needed. Fig. \ref{Fig. 6} also demonstrates that even with the same $N$ the
system objective value degrade a little when all the channel state is
measured with 30\% uncertainty, \textit{i.e. }$\hat{H}\left( t\right)
=H\left( t\right) +\Delta H\left( t\right) ,$ where $H\left( t\right) $ is
the real channel gain, $\left\vert \Delta H\left( t\right) /H\left( t\right)
\right\vert =30\%$ is channel uncertainty and $\hat{H}\left( t\right) $ the
measured channel gain.

\begin{figure}[tbph]
% Requires \usepackage{graphicx}
\center\includegraphics[width=8cm, height=5.8cm]{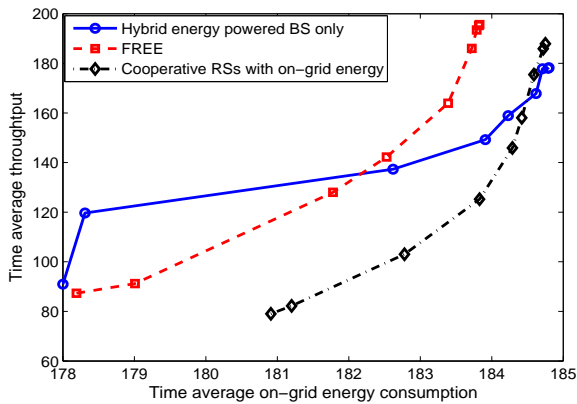}
%\centering
%\includegraphics[width=7.3cm]{compare-grid-relay.eps}
\caption{The time average throughput vs time average on-grid energy
consumption.}
\label{Fig. 7}
\end{figure}

The tradeoff between the system throughput and on-grid energy consumption is
shown in Fig. \ref{Fig. 7}, which is depicted by decreasing $\varphi $
gradually while $\phi $ and $V$ are fixed. In order to measure the effects
of renewable energy and cooperative relay, \emph{FREE }is compared with the
other two schemes: the scheme with hybrid energy powered BS without relay,
and the scheme with both RSs and BS powered by on-grid energy alone. The
compared schemes are also designed by Lyapunov optimization. In Fig. \ref%
{Fig. 7} it is found that for on-grid energy consumption smaller than 182\
the system throughput of \emph{FREE }is even smaller than the scheme with
hybrid energy powered\ BS only. This range corresponds to a larger $\varphi
, $ which forces \emph{FREE} to give up using most RSs to save on-grid
energy. However the RSs are still there generating constant power
consumption $\Delta p_{i}$ without any throughput contribution. As $\varphi $
continues to decrease, more RSs are employed by \emph{FREE }to obtain more
throughput than the scheme with hybrid energy powered\ BS only. It is also
found that mostly the scheme with cooperative RSs powered by on-grid energy
alone is not energy-efficient compared with the other two schemes. When more
on-grid energy is consumed, its throughput is larger than the scheme with BS
only but still smaller than \emph{FREE. }Thus, we can see that the benefits
due to renewable energy and cooperative relay depend on proper parameter
setting.

In Fig. \ref{Fig. 9} we compare \emph{FREE }with the other\ two related
works in \cite{Weiyu_2007}\ and \cite{WangTao_TVT}. Lagrange dual
decomposition method and discrete particle swarm optimization method are
adopted respectively by \cite{Weiyu_2007} and \cite{WangTao_TVT} for
resource allocation in cooperative relay networks, \textit{i.e.}, to pursue
\begin{equation*}
\begin{array}{cc}
\max & \phi \sum_{n}f\left( \mu _{n}\left( t\right) \right) -\varphi P\left(
t\right)%
\end{array}%
\end{equation*}%
For fair comparison, we assume that the BSs in \cite{Weiyu_2007} and \cite%
{WangTao_TVT} are also powered by hybrid energy. We assume there is a
heuristic renewable energy management scheme for \cite{Weiyu_2007} and \cite%
{WangTao_TVT}. With the heuristic scheme, the harvested energy is always
charged into the battery as long as there is enough space, and the BS always
uses the saved renewable energy with priority. Since the two related works
do not consider flow control, we measure the time average\ service rate and
the time average on-grid energy consumption with the decreasing $\varphi .$
It is found in Fig. \ref{Fig. 9} that the performance of \emph{FREE}
outperforms the other two schemes, since it adapts the resource allocation
to the arrival packets and battery states. Moreover, the performance of \cite%
{WangTao_TVT} is better than \cite{Weiyu_2007}, since the former takes into
account dynamic channel state explicitly.

%\FRAME{ftbpFU}{3.5639in}{1.6847in}{0pt}{\Qcb{Performance comparison between
%the proposed algorithm and the suboptimal algorithm with $V=3,$ $\protect%
%\phi =1,$ $\protect\varphi =0.5.$}}{\Qlb{Fig. 9}}{tradeoff comparison vs
%different maximum power.eps}{\special{language "Scientific Word";type
%"GRAPHIC";display "USEDEF";valid_file "F";width 3.5639in;height
%1.6847in;depth 0pt;original-width 14.1803in;original-height
%4.8646in;cropleft "0";croptop "1";cropright "1";cropbottom "0";filename
%'simulation/v6/tradeoff comparison vs different maximum
%power.eps';file-properties "XNPEU";}}
\begin{figure}[tbph]
% Requires \usepackage{graphicx}
\center\includegraphics[width=9cm, height=5.8cm]{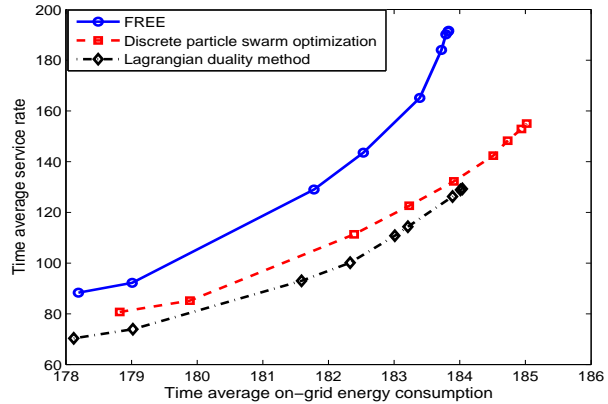} %\centering
\caption{Performance comparison between the proposed algorithm and the other
algorithms}
\label{Fig. 9}
\end{figure}

In Fig. \ref{Fig. 10}, we use the fairness index $FI=\left( \sum_{n=1}^{N}%
\bar{R}_{n}\right) ^{2}/\left( N\sum_{n=1}^{N}\bar{R}_{n}^{2}\right) $ to
measure the fairness. If all users have the same throughput, $FI$ is $1.$ It
is found that with the number of user from 10 to 60, the proposed scheme
maintains better fairness compared with the throughput maximization scheme.
\begin{figure}[tbph]
% Requires \usepackage{graphicx}
\center\includegraphics[width=8.5cm, height=5.8cm]{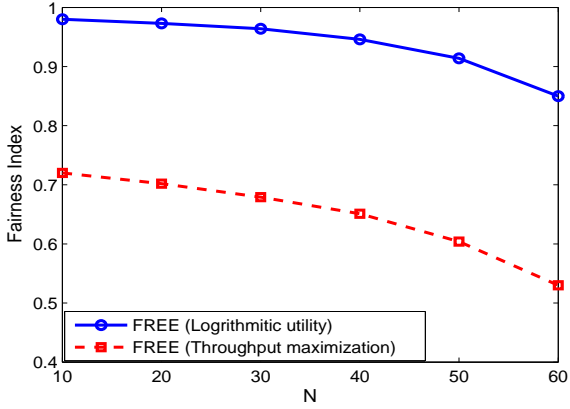}
%\centering
%\includegraphics[width=7.3cm]{fairness-index.eps}
\caption{Fairness index}
\label{Fig. 10}
\end{figure}

\section{Conclusion}

In this paper, we consider to optimize the tradeoff between downlink
throughput utility and on-grid energy consumption of an OFDMA cellular
network with the assistance of multiple RSs. The BS is powered by the
conventional utility grid and the\ renewable energy. A joint design of flow
control, radio resource allocation and energy management is proposed to
handle the coupling between the energy consumption of the cooperative
network and the storage energy allocation. The joint design scheme, namely
\emph{FREE} can be implemented by exploiting the local channel states,
buffer states and energy states. Technical proof is established to ensure
that \emph{FREE} can produce a close-to-optimal performance while allowing
finite data buffer and energy storage device. Simulation results show that
\emph{FREE }demonstrates better performance compared with other schemes.

\begin{center}
\textsc{Appendix A}
\end{center}

\textsc{1. Solution to }$\left( \ref{11}\right) $

We first pursue the solution to $\left( \ref{11}\right) .$ To address the
variable coupling in the objective function, we let
\begin{equation}
d_{i,n}^{m}\hat{=}p_{B}^{m}H_{B,n}^{m}+p_{i,n}^{m}H_{i,n}^{m}.  \label{50}
\end{equation}%
The problem $\left( \ref{11}\right) $ is rewritten as
\begin{equation}
\begin{array}{cc}
\max &
\begin{array}{c}
\Upsilon _{1}=q_{n}\left( t\right) \log \left( 1+d_{i,n}^{m}\right) \\
\!+\!\left(\!S\!\left( t\right)\!-\!\theta\!-\!\lambda _{B}\!+\!\frac{%
\!\left( V\varphi \!+\!\lambda _{i}\!\right)\!H_{B,n}^{m}}{H_{i,n}^{m}}%
\!\right) p_{B}^{m}\!-\!\frac{V\varphi\!+\!\lambda _{i}}{H_{i,n}^{m}}%
d_{i,n}^{m}%
\end{array}
\\
\text{s.t.} & 0\!\leq\!d_{i,n}^{m}\!\leq\!\min\!\left\{%
\!p_{B}^{m}H_{B,i}^{m},\bar{P}\!\left(
m\right)\!H_{i,n}^{m}\!+\!p_{B}^{m}H_{B,n}^{m}\!\right\} \\
& 0\leq p_{B}^{m}\leq \bar{P}\left( m\right) ,%
\end{array}%
,  \label{13}
\end{equation}

If $S\left( t\right) -\theta -\lambda _{B}+\frac{\left( V\varphi +\lambda
_{i}\right) H_{B,n}^{m}}{H_{i,n}^{m}}\geq 0$, then the optimal transmission
power of the BS over subcarrier $m$ is $p_{B}^{m}=\bar{P}\left( m\right) .$
Substituting $p_{B}^{m}=\bar{P}\left( m\right) $ into $\left( \ref{13}%
\right) ,$ the optimal $d_{i,n}^{m}$ is
\begin{equation*}
d_{i,n}^{m}\!=\!\max\!\left\{\!0,\!\min\!\left\{\!\tilde{d}_{i,n}^{m},\bar{P}%
\!\left( m\right)\!H_{B,i}^{m},\bar{P}\!\left( m\right)\!\left(
H_{i,n}^{m}\!+\!H_{B,n}^{m}\!\right)\!\right\}\!\right\}
\end{equation*}%
where $\tilde{d}_{i,n}^{m}=\frac{q_{n}\left( t\right) H_{i,n}^{m}}{V\varphi
+\lambda _{i}}-1$ is the solution to $\nabla _{d_{i,n}^{m}}\Upsilon _{1}=0.$
Then based on the definition of $d_{i,n}^{m}$ in $\left( \ref{50}\right) $
the optimal $p_{i,n}^{m}$ is $\left( \frac{d_{i,n}^{m}-\bar{P}\left(
m\right) H_{B,n}^{m}}{H_{i,n}^{m}}\right) ^{+}.$

If $S\left( t\right) -\theta -\lambda _{B}+\frac{\left( V\varphi +\lambda
_{i}\right) H_{B,n}^{m}}{H_{i,n}^{m}}<0,$ then the optimal solutions to $%
\left( \ref{13}\right) $ are $p_{B}^{m}=0$ and $d_{i,n}^{m}=0$ and then $%
p_{i,n}^{m}=0$

\textsc{2. Solution to }$\left( \ref{12}\right) $

Let the objective of $\left( \ref{12}\right) $ be $\Upsilon _{2}.$ Since the
objective function of $\left( \ref{12}\right) $ is decreasing with respect
to $p_{i,n}^{m},$ the optimal value is $p_{i,n}^{m}=0.$ There are two cases
to be considered for the optimal $p_{B}^{m}$. If $H_{B,i}^{m}>H_{B,n}^{m},$
then the optimal transmission power of the BS is $p_{B}^{m}=0.$ If $%
H_{B,i}^{m}\leq H_{B,n}^{m},$ the optimal value is
\begin{equation*}
p_{B}^{m}=\left\{
\begin{array}{cc}
\bar{P}\left( m\right) & \text{if }\!S\!\left(
t\right)\!-\!\theta\!-\!\lambda _{B}\!\geq\!0 \\
\!\left(\!\frac{q_{n}\!\left( t\right)\!}{\lambda _{B}\!-\!S\!\left(
t\right)\!+\!\theta }\!-\!\frac{1}{H_{B,i}^{m}}\!\right) _{0}^{\bar{P}%
\!\left( m\right)\!} & \text{otherwise}%
\end{array}%
\right. ,
\end{equation*}
where the\ optimal $p_{B}^{m}$ in the case of $S\left( t\right) -\theta
-\lambda _{B}<0$ is obtained by solving $\nabla _{p_{B}^{m}}\Upsilon _{2}=0$
considering the power mask and non-negative constraints.

\begin{center}
\textsc{Appendix B}

\textsc{Proof of Theorem \ref{Theo:EnergyLevel}}
\end{center}

We now use the induction method to derive $\left( \ref{27}\right) $.
Initially, without any action on the storage device, $0\leq S\left( 0\right)
\leq S_{{}}^{\max }$ holds. Supposing $\left( \ref{27}\right) $ holds for
time slot $t$, we will prove it holds for slot $t+1$. There are the
following three cases to be considered.

i) If $0\leq S\left( t\right) <O_{{}}^{\max }$, then
\begin{eqnarray}
&&S\!\left( t\!+\!1\right)\!  \notag \\
&=&S\!\left( t\right)\!+\!\delta _{{}}^{\ast }w\!\left(
t\right)\!-\!O_{{}}^{\ast }\!\left( t\right)\!  \notag \\
&=&S\!\left( t\right)\!+\!w_{{}}\!\left( t\right)\!-\!\max\!\left\{
0,p_{B}^{\ast }\!\left( t\right)\!+\!\Delta p_{B}^{{}}\!\left(
t\right)\!-\!J^{\ast }\!\right\}  \label{28} \\
&=&S\!\left( t\right)\!+\!w_{{}}\!\left( t\right)\! \geq\! 0  \notag
\end{eqnarray}%
where $\delta _{{}}^{\ast }$ and $O_{{}}^{\ast }\left( t\right) $ are
derived from solving $\left( \ref{24}\right) .$\ Due to $\theta \hat{=}%
\varphi V+O_{{}}^{\max },$ it falls into the second situation of the
solution to $\left( \ref{24}\right) .$

Next we show that $S\left( t+1\right) \leq S_{{}}^{\max }$. By $\left( \ref%
{3}\right) ,$ we have%
\begin{equation}
S\!\left( t\!+\!1\right) \!<\!S\!\left( t\right) \!+\!w_{{}}^{\max
}\!<\!O_{{}}^{\max }\!+\!w_{{}}^{\max }  \label{30}
\end{equation}%
By substituting the upper bound of $V$ $\left( \ref{26}\right) $ into $%
\left( \ref{30}\right) ,$ we have
\begin{equation*}
S\left( t+1\right) <S_{{}}^{\max }-V\varphi <S_{{}}^{\max }
\end{equation*}

ii) If $O_{{}}^{\max }\leq S\left( t\right) <S_{{}}^{\max }-w^{\max },$ we
have
\begin{equation*}
S\left( t+1\right) =S\left( t\right) +\delta _{{}}^{\ast }w\left( t\right)
-O^{\ast }\left( t\right) \geq O_{{}}^{\max }-O^{\ast }\left( t\right) \geq
0.
\end{equation*}%
Moreover,
\begin{eqnarray*}
S\left( t+1\right) &\leq &S_{{}}\left( t\right) +w^{\max } \\
&<&S_{{}}^{\max }-w^{\max }+w^{\max }=S_{{}}^{\max }
\end{eqnarray*}

iii) If $S_{{}}^{\max }-w^{\max }\leq S\left( t\right) \leq S^{\max },$ it
means $S\left( t\right) \geq \theta ,$ which is obtained by substituting the
upper bound of $V$ $\left( \ref{26}\right) $ into $\theta \hat{=}\varphi
V+O_{{}}^{\max }.$ Thus, we have $\delta _{{}}^{\ast }=0\ $and
\begin{eqnarray}
S\left( t+1\right) &=&S\left( t\right) +\delta _{{}}^{\ast }w\left( t\right)
-O_{{}}^{\ast }\left( t\right)  \notag \\
&>&S_{{}}^{\max }-w^{\max }-O^{\max }>0.  \label{51}
\end{eqnarray}%
\begin{eqnarray*}
S\left( t+1\right) &=&S\left( t\right) +\delta _{{}}^{\ast }w\left( t\right)
-O_{{}}^{\ast }\left( t\right) \\
&=&S\left( t\right) -O_{{}}^{\ast }\left( t\right) <S^{\max }.
\end{eqnarray*}

To summarize, $S\!\left( t\!+\!1\right)\!\in\!\left[0,S^{\max }\right]\! $
holds if $S\!\left( t\right)\!\in\!\left[0,S^{\max }\right]\! .$

\begin{center}
\textsc{Appendix C}

\textsc{Proof of Proposition \ref{Prop:boundsQ}}
\end{center}

$Q_{n}\left( 0\right) =0<Q^{\max }.$ Supposing that $Q_{n}\left( t\right)
\leq Q_{{}}^{\max }$, we have two cases. $(a)$ $Q_{n}\left( t\right) \leq
Q_{{}}^{\max }-A^{\max }.$ Obviously, $Q_{n}\left( t+1\right) \leq
Q_{n}\left( t\right) +A^{\max }\leq Q_{{}}^{\max }$ according to the
dynamics of $\left( \ref{33}\right) .$ $(b)$ If $Q^{\max }\geq Q_{n}\left(
t\right) >Q_{{}}^{\max }-A^{\max },$ then $R_{n}\left( t\right) =0$
according to $\left( \ref{32}\right) $. Thus, we have $Q_{n}\left(
t+1\right) \leq Q_{n}\left( t\right) \leq Q^{\max }$.

%\begin{center}
%\textsc{Appendix D}
%
%\textsc{Proof of Theorem 6}
%\end{center}

\end{document}